\newcolumntype{C}[1]{>{\centering\arraybackslash}m{#1}}
\tikzset{square/.style={regular polygon,regular polygon sides=4}}
\tikzset{every picture/.style=semithick}
\crefname{subsection}{Subsection}{Subsections}
\definecolor{colorOne}{HTML}{9932CC}\definecolor{colorTwo}{HTML}{af0404}\definecolor{colorThree}{HTML}{FF89D2}\definecolor{colorFour}{HTML}{ADFF2F}\definecolor{colorFive}{HTML}{03A9F4}
\definecolor{colorSix}{HTML}{009688}
\definecolor{colorSeven}{HTML}{FFA000}\definecolor{colorEight}{HTML}{FFEE58}\definecolor{colorNine}{HTML}{00FF00}\definecolor{colorTen}{HTML}{A7FFEB}\definecolor{colorEleven}{HTML}{ff1744}\definecolor{colorTwelve}{HTML}{fcb1b1}
\newsavebox{\pinkvertex}
\newsavebox{\lightbluevertex}
\sbox\pinkvertex{\tikz[baseline]{\node[fill=colorThree,draw,circle,anchor=south,inner sep=0.5ex,outer sep=0] {};}}
\sbox\lightbluevertex{\tikz[baseline]{\node[fill=colorTen,draw,circle,anchor=south,inner sep=0.5ex,outer sep=0] {};}}
\pretocmd{\NAT@citexnum}{\@ifnum{\NAT@ctype>\z@}{\let\NAT@hyper@\relax}{}}{}{}
\def\NP{\textsf{NP}}
\def\O{\mathcal{O}} \def\CC{\textsc{Colourful Components}}
\def\CP{\textsc{Colourful Partition}}
\newcommand{\problemdef}[4][XXXEMPTYLABELXXX]{
	\begin{center}
		\begin{boxedminipage}{.99\textwidth}
			\textsc{{#2}}\ifthenelse{\equal{#1}{XXXEMPTYLABELXXX}}{}{\label{#1}}\\[2pt]
			\begin{tabular}{@{\hspace{5pt}}r p{0.8\textwidth}}
				\textit{Input:}  & {#3}\\
				\textit{Question:}  & {#4}
			\end{tabular}
		\end{boxedminipage}
	\end{center}
}
\theoremstyle{plain}
\newtheorem{theorem}{Theorem}
\newtheorem{lemma}{Lemma}
\theoremstyle{definition}
\newtheorem{definition}{Definition}
\newtheorem{construction}{Construction}
\newtheorem{question}{Question}
\theoremstyle{remark}
\newtheorem{remark}{Remark}
\begin{document}
\title{Colourful components in \texorpdfstring{$k$}{k}-caterpillars and planar graphs\thanks{A preliminary version of the paper was accepted at \textit{IWOCA 2019} \cite{MR3991231}.}}
\date{}

\author[1]{Janka Chleb\'\i kov\'a}
\author[2]{Cl\'ement Dallard\footnote{This work is supported by the Slovenian Research Agency (research project N1-0102).}}
\affil[1]{School of Computing, University of Portsmouth, Portsmouth, United Kingdom\newline
 \texttt{janka.chlebikova@port.ac.uk}}
\affil[2]{UP FAMNIT, University of Primorska, Koper, Slovenia\newline
\texttt{clement.dallard@famnit.upr.si}}

\maketitle

\begin{abstract}
    A connected component of a vertex-coloured graph is said to be \emph{colourful} if all its vertices have different colours.
    By extension, a graph is colourful if all its connected components are colourful.
    Given a vertex-coloured graph $G$ and an integer $p$, the {\CC} problem asks whether there exist at most $p$ edges whose removal makes $G$ colourful and the {\CP} problem asks whether there exists a partition of $G$ into at most $p$ colourful components.
    In order to refine our understanding of the complexity of the problems on trees, we study both problems on $k$\=/caterpillars,
    which are trees with a central path $P$ such that every vertex not in $P$ is within distance $k$ from a vertex in $P$.
    We prove that {\CC} and {\CP} are \NP\=/complete on $4$\=/caterpillars with maximum degree~$3$, $3$\=/caterpillars with maximum degree~$4$ and $2$\=/caterpillars with maximum degree~$5$.
    On the other hand, we show that the problems are linear-time solvable on $1$\=/caterpillars.
    Hence, our results imply two complexity dichotomies on trees: {\CC} and {\CP} are linear-time solvable on trees with maximum degree~$d$ if $d \leq 2$ (that is, on paths), and \NP\=/complete otherwise; {\CC} and {\CP} are linear-time solvable on $k$\=/caterpillars if $k \leq 1$, and \NP\=/complete otherwise.
    We leave three open cases which, if solved, would provide a complexity dichotomy for both problems on $k$\=/caterpillars, for every non-negative integer $k$, with respect to the maximum degree.
    We also show that {\CC} is \NP\=/complete on $5$-coloured planar graphs with maximum degree~$4$ and on $12$-coloured planar graphs with maximum degree~$3$.
    Our results answer two open questions of \citeauthor{MR3968574} mentioned in [\textit{30\textsuperscript{th} Annual Symposium on Combinatorial Pattern Matching}, 2019].
\end{abstract}

\section{Introduction}

In this paper, all graphs are unweighted, undirected and simple (do not contain loops or multiple edges).
A \emph{vertex-coloured graph}, or simply a \emph{coloured graph}, is a graph whose vertices are (not necessarily properly) coloured.
A connected component of a coloured graph is a \emph{colourful component} if all its vertices have different colours.
A graph is said to be \emph{colourful} if all its connected components are colourful.

We focus on two closely related decision problems where a coloured graph and an integer $p$ are given as inputs: (i) the {\CC} problem asks if there exist at most $p$ edges whose removal makes the graph colourful; (ii) the {\CP} problem is to decide if there exists a partition of the vertex set into at most $p$ parts such that each part induces a colourful component in the graph.

The study of both {\CC} and {\CP} was initially motivated by problems raised in comparative genomics.
One key problem is to partition a set of genes into orthologous genes, which are sets of genes in different species that have evolved through speciation events only, \textit{i.e.}\ originated by vertical descent from a single gene in the last common ancestor.
The problem has been modelled as a graph problem where orthologous genes translate into colourful components in the graph~\cite{zheng2011omg,MR3392498}.
Each vertex of the graph represents a gene, and the colour of a vertex illustrates the specie of the corresponding gene.
An edge between two vertices is present in the graph if the two corresponding genes are (sufficiently) similar.
The quality of a partition of a set of genes into orthologous genes can be expressed in different ways.
Minimising the number of similar genes in different subsets of the partition is a well studied variant~\cite{MR2988970,MR3968574,MR1986480,zheng2011omg,MR3836066}, which corresponds to minimising the number of edges between the colourful components (as in the optimisation variant of {\CC}).
Alternatively, one can ask for a partition of minimum size, \textit{i.e}\ a partition containing the minimum number of orthologous genes, or equivalently the minimum number of colourful components~\cite{MR3392498,MR3968574,MR3810330} (as in the optimisation variant of {\CP}).
Another variant, however not studied in this paper, considers an objective function that maximises the number of edges in the transitive closure~\cite{MR3392498,MR3810330,MR3836066}.
See \cref{colourful graph} for an example of a graph partitioned into colourful components.

We give the formal definitions of the two problems considered in this paper.

\problemdef[definition: CC]{\CC}{A vertex-coloured graph $G$, a non-negative integer $p$.}{Are there at most $p$ edges of $G$ whose removal makes $G$ colourful?}

\problemdef[definition: CP]{\CP}{A vertex-coloured graph $G$, a positive integer $p$.}{Is there a partition of the vertex set of $G$ with at most $p$ parts such that each part induces a colourful component in $G$?}

\begin{figure}[htbp]
	\centering
	\begin{tikzpicture}[scale=1.25,rotate=-100]
	\tikzset{every node/.style={draw,circle,fill=gray,inner sep=2.5pt,outer sep=0},
		every label/.append style={rectangle},
		label distance=0pt};

	\node[fill=white] (a) at (0,0) {};
	\node[fill=colorSeven] (b) at ($(a)+(60:1)$) {};
	\node[fill=colorNine] (c) at ($(a)+(120:1)$) {};
	\draw[] (a) -- (b) -- (c) -- (a);

	\coordinate (xa) at ($(a)+(-90:0.2)$);
	\coordinate (xb) at ($(b)+(60:0.2)$);
	\coordinate (xc) at ($(c)+(120:0.2)$);
	\draw [dashed] (xa) to [out=0, in=-30] (xb) to [out=150, in=30] (xc) to [out=210, in=180] (xa);

	\node[fill=colorEleven] (d) at ($(a)+(-60:1)$) {};
	\node[fill=colorSeven] (e) at ($(a)+(-120:1)$) {};
	\node[fill=colorFive] (f) at ($(d)+(-120:1)$) {};
	\node[fill=white] (g) at ($(d)+(-60:1)$) {};
	\draw[] (d) -- (e) -- (f) -- (g) -- (e) (f) -- (d) -- (g) (e) -- (a) -- (d);

	\coordinate (xd) at ($(d)+(60:0.2)$);
	\coordinate (xe) at ($(e)+(120:0.2)$);
	\coordinate (xf) at ($(f)+(-120:0.2)$);
	\coordinate (xg) at ($(g)+(-30:0.2)$);
	\draw [dashed] (xd) to [out=150, in=30] (xe) to [out=180+30, in=150] (xf) to [out=180+150, in=-110] (xg) to [out=180-110, in=-30] (xd);

	\node[fill=colorFive] (h) at ($(e)+(-230:1)$) {};
	\node[fill=colorNine] (i) at ($(h)+(+110:1)$) {};
	\draw[] (e) -- (h) -- (i);

	\coordinate (xh) at ($(h)+(-60:0.2)$);
	\coordinate (xi) at ($(i)+(120:0.2)$);
	\draw [dashed] (xh) to [out=30, in=30] (xi) to [out=-150, in=-150] (xh);
\end{tikzpicture}
 	\caption{A vertex-coloured graph. The dashed parts represent a partition into colourful components. Removing the three edges between the dashed parts yields a colourful graph.}
	\label{colourful graph}
\end{figure}

Observe that, on a tree, there is a solution to {\CC} with $p$ edges if and only if there is a solution to {\CP} with $p+1$ parts.
However, this is not the case on general graphs~\cite{MR3968574}.
Both problems are known to be \NP-complete on subdivided stars~\cite{MR3810330}, trees of diameter at most~$4$~\cite{MR2988970}, and trees with maximum degree~$6$~\cite{MR3968574}.
Trees of diameter at most $4$ are in fact a subclass of $2$\=/caterpillars (defined later), so both problems are \NP-complete on $2$\=/caterpillars (when there is no restriction on the maximum degree).

It is interesting to notice the similarities between {\CC} and the \textsc{Multicut}~\cite{MR3759884,MR3172250} and \textsc{Multi-Multiway Cut}~\cite{MR2323384} problems.
In the \textsc{Multicut} problem, a graph and a set of pairs of vertices are given and the goal is to minimise the number of edges to remove in order to disconnect each pair of vertices.
In the \textsc{Multi-Multiway Cut} problem, a graph and sets of vertices are given and the goal is to minimise the number of edges to remove in order to disconnect all paths between vertices from the same vertex set.
Thus, {\CC} is a special case where such sets of vertices form a partition.

Another well-studied problem dealing with subgraphs of a vertex-coloured graph is the \textsc{Graph Motif} problem~\cite{MR2799278,MR2771171,MR2506799}.
This problem takes a coloured graph and a multiset of colours $M$ (the motif) as input, and the goal is to determine whether there exists a connected subgraph $S$ such that the multiset of colours used by the vertices in $S$ corresponds exactly to $M$.
If $M$ is a set (where each colour appears at most once), then $M$ is said to be colourful.

\paragraph{Preliminaries}
Given a graph $G$, we denote by $V(G)$ the vertex set of $G$ and by $E(G)$ the edge set of $G$.
We assume that a coloured graph $G$ is always associated with a colouring function $c$ from $V(G)$ to a set of colours: for each vertex $u\in V(G)$, $c(u)$ is the colour of the vertex $u$.
The \emph{colour-multiplicity} of $G$ corresponds to the maximum number of occurrences of any colour in the graph.
If $G$ contains at most $\ell$ colours we say that $G$ is an $\ell$-coloured graph.
Given a set of edges $S \subseteq E(G)$, we denote by $G-S$ the graph obtained from $G$ by removing the edges in $S$; if $S$ contains a unique edge $e$, then we may simply write $G-e$.
For a set $X \subseteq V(G)$ of vertices, $G[X]$ denotes the subgraph of $G$ induced by $X$.
The \emph{(open) neighbourhood} of a vertex~$u$, denoted by $N(u)$, is the set of vertices adjacent to~$u$.
The \emph{degree} of a vertex $u$ is the cardinality of its open neighbourhood.
The \emph{closed neighbourhood} of~$u$ is the set $N[u] = N(u) \cup \{u\}$.
A \emph{star} is a tree with at most one vertex with degree more than two.
A \emph{$k$\=/caterpillar} is tree containing a path $P$, called the \emph{backbone}, such that every vertex that is not in $P$ is within distance $k$ from a vertex in $P$.
Note that a $0$-caterpillar is simply a path.

\paragraph{Overview of the results}

\Cref{section trees} focuses on the complexity of {\CC} and {\CP} on trees, and in particular on $k$-caterpillars.
In \cref{hard-caterpillars}, we prove that both problems are \NP-complete on $4$\=/caterpillars with maximum degree~$3$, on $3$\=/caterpillars with maximum degree~$4$, and on $2$\=/caterpillars with maximum degree~$5$.
This answers an open question of \citeauthor{MR3968574}, asked in~\cite{MR3968574}, regarding the complexity of the problems on trees with maximum degree at most~$5$.
Then, in \cref{easy-caterpillars}, we give a linear-time algorithm for {\CC} and {\CP} on $1$\=/caterpillars (without restriction on the maximum degree), even in the case where the backbone induces a cycle.
This result improves the complexity of the previously known quadratic-time algorithm on paths~\cite{MR3810330} and applies to a wider class of graphs.
We then extend our technique for $1$\=/caterpillars to a larger class of graphs.
Of particular interest are the following complexity dichotomies of the problems on trees which we derive from our results.

\begin{restatable}{corollary}{dichotomytreeboundeddegree}\label{dichotomy tree bounded degree}
{\CC} and {\CP} are linear-time solvable on paths, that is, on trees with maximum degree at most $2$, and \NP-complete on trees with maximum degree at least $3$.
\end{restatable}

\begin{restatable}{corollary}{dichotomykcaterpillar}\label{dichotomy k-caterpillar}
    {\CC} and {\CP} are polynomial-time solvable on $k$-caterpillars if $k \leq 1$, and \NP-complete otherwise.
\end{restatable}

In \cref{subcubic graphs}, we consider the complexity of {\CC} on planar graphs with small maximum degree.
It is known that the problem is \NP-complete on $3$\=/coloured graphs with maximum degree~$6$~\cite{MR2988970}, while {\CP} is \NP-complete on $3$\=/coloured $2$\=/connected (planar) graphs with maximum degree~$3$~\cite{MR3968574}.
However, the complexity of {\CC} on $\ell$-coloured graphs with maximum degree at most $5$ was unknown and left as an open question in~\cite{MR3968574}.
We answer that question by showing that {\CC} is \NP-complete on $5$\=/coloured planar graphs with maximum degree~$4$ and on $12$-coloured planar graphs with maximum degree~$3$.
 
\section{Complexity on trees}\label{section trees}
This section is devoted to the study of the complexity of {\CC} and {\CP} on trees.
We first derive \NP\=/completeness results in \cref{hard-caterpillars}, and then present a linear-time algorithm on a superclass of $1$\=/caterpillars in \cref{easy-caterpillars}.
These results imply two complexity dichotomies for {\CC} and {\CP} on trees which we present in \cref{dichotomies}: one dichotomy with respect to the maximum degree and the other with respect to the smallest integer $k$ such that the input tree is a $k$\=/caterpillar, respectively.
\subsection{\texorpdfstring{\NP\=/}{NP-}completeness results}\label{hard-caterpillars}

In this subsection, we show that {\CC} and {\CP} are \NP-complete on $4$\=/caterpillars with maximum degree~$3$, $3$\=/caterpillars with maximum degree~$4$, and $2$-caterpillars with maximum degree~$5$.
We propose a reduction from $3$\=/\textsc{SAT} with at most three occurrence of each variable, denoted by $3{,}3$-\textsc{SAT}, which is proved \NP-complete in \cite{MR1637890}.

\problemdef{\textsc{$3{,}3$-SAT}}{A $3$\=/CNF formula $\phi$ in which each variable occurs at most three times.}{Is there a satisfying assignment for $\phi$?}

If a clause in a $3$\=/CNF formula contains exactly one literal, we can set the value of the corresponding variable deterministically and either reduce the number of clauses or conclude that the formula is unsatisfiable.
Hence, we assume that each clause of the input $3$\=/CNF formula contains at least $2$ literals.
Furthermore we can suppose that each literal appears either one or two times; otherwise the formula can be simplified by setting the variable to the value that satisfies all clauses containing it.

\begin{construction}\label{from phi to tree}
	Let $\phi$ be an instance of $3{,}3$-\textsc{SAT}, that is, a set of $m$ clauses $C_1, C_2, \dots, C_m$ on $n$ variables $x_1, x_2, \dots, x_n$, where each clause contains at most three literals and where each variable appears at most three times.
	We denote by $m_3$ the number of clauses containing three literals and by $m_2$ the number of clauses containing two literals.
	We construct a tree $T$ in which each variable and each clause is represented by a gadget (a connected subtree of $T$).

	For each variable $x_i$ of $\phi$, we construct a \emph{variable gadget:} Firstly, create three vertices labelled $r_{x_i}$, $x_i$, $\bar{x}_i$, and connect $x_i$ and $\bar{x}_i$ to $r_{x_i}$.
	If a clause $C_j$ contains the literal $x_i$, then create a vertex labelled $x_{i,j}$ and connect it to the vertex $x_i$.
	Similarly, if a clause $C_j$ contains the literal $\bar{x}_i$, then create a vertex labelled $\bar{x}_{i,j}$ and connect it to the vertex $\bar{x}_i$.
	Since each literal appears at most two times, $r_{x_i}$ is the root of a tree of depth $2$ with maximum degree~$3$.
	Also, since each literal appears at least once, each leaf of the gadget corresponds to a vertex of the type~$x_{i,j}$, that is, each leaf represents a literal in a clause (see \cref{variable gadget}).

	Then, for each clause $C_j$ of $\phi$, we construct a \emph{clause gadget}.
	If $C_j$ contains only two literals $\ell_{1,j}$ and $\ell_{2,j}$:
	\begin{itemize}
		\item \textit{Clause gadget of type~$A$:}
		      Create two vertices $z_j$, $z'_j$, two vertices labelled $\ell_{1,j}$, $\ell_{2,j}$ representing the literals in $C_j$, and one extra vertex $r_{C_j}$.
		      Then, connect $z_j$ and $z'_j$ to $r_{C_j}$, $\ell_{1,j}$ to $z_j$ and $\ell_{2,j}$ to $z'_j$.
		      The gadget is a tree of depth $2$ with maximum degree~$3$ rooted in $r_{C_j}$ (see \cref{clause gadget type A}).
	\end{itemize}
	If $C_j$ contains three literals $\ell_{1}$, $\ell_{2}$, $\ell_{3}$, then, depending on the desired properties of the final tree $T$, we propose two different kinds of clause gadgets:
	\begin{itemize}
		\item \textit{Clause gadget of type~$B$:}
		      Create four vertices $y_j$, $y'_j$, $z_j$, $z'_j$, three vertices labelled $\ell_{1,j}$, $\ell_{2,j}$, $\ell_{3,j}$ representing the literals in $C_j$, and one extra vertex $r_{C_j}$.
		      Then, add the edges $\{\ell_{1,j}, y_j\}$, $\{\ell_{2,j}, y'_j\}$, $\{\ell_{3,j}, z'_j\}$, $\{y_j, z_j\}$, $\{y'_j, z_j\}$, and the edges $\{z_j, r_{C_j}\}$, $\{z'_j, r_{C_j}\}$.
		      The gadget is a tree of depth $3$ with maximum degree~$3$ rooted in $r_{C_j}$ (see \cref{clause gadget type B}).

		\item \textit{Clause gadget of type~$C$:}
		      Create three vertices $z_j$, $z_j'$, $z_j''$, three vertices labelled $\ell_{1,j}$, $\ell_{2,j}$, $\ell_{3,j}$ representing the literals in $C_j$, and one extra vertex $r_{C_j}$.
		      Then, connect $z_j$, $z_j'$ and $z_j''$ to $r_{C_j}$, $\ell_{1,j}$ to $z_j$, $\ell_{2,j}$ to $z_j'$, and $\ell_{3,j}$ to $z_j''$.
		      The gadget is a tree of depth $2$ with maximum degree~$4$ rooted in $r_{C_j}$ (see \cref{clause gadget type C}).
	\end{itemize}

    \begin{figure}[htb]
	\centering
	\begin{subfigure}[b]{0.45\linewidth}
		\centering
		\begin{tikzpicture}[level/.style={sibling distance = 1.75cm/#1,
				level distance = 0.75cm},scale=1]
	\tikzset{every node/.style={draw,circle,fill=gray,minimum size=5pt,inner sep=2.5pt,outer sep=0},
		every label/.append style={rectangle},
		label distance=2pt};

	\begin{scope}[edge from parent/.style={draw}]\node[fill=colorFive,square,label=above:$\mathstrut r_{x_1}$] at (3,2) {}
		child { node[fill=colorSeven,label=left:$\mathstrut x_1$] {}
				child { node[fill=colorThree,label=below:$\mathstrut x_{1,2}$] {}
					}
				child { node[fill=colorEight,label=below:$\mathstrut x_{1,5}$] {}}
			}
		child { node[fill=colorSeven,label=right:$\mathstrut \bar{x}_1$] {}
				child { node[fill=colorTen,label=below:$\mathstrut \bar{x}_{1,3}$] {}}
			};

\end{scope}
\end{tikzpicture}
 		\caption{Variable gadget of $x_1$}
		\label{variable gadget}
	\end{subfigure}
	\begin{subfigure}[b]{0.45\linewidth}
		\centering
		\begin{tikzpicture}[level/.style={sibling distance = 1.75cm/#1,
				level distance = 0.75cm},scale=1]
	\tikzset{every node/.style={draw,circle,fill=gray,minimum size=5pt,inner sep=2.5pt,outer sep=0},
		every label/.append style={rectangle},
		label distance=2pt};

	\begin{scope}[edge from parent/.style={draw}]
		\node[fill=black,square,label=above:$\mathstrut r_{C_3}$] at (3,2) {}
		child { node[fill=colorTwo,label=left:$\mathstrut z_3$] {}
				child { node[fill=colorTen,label=below:$\mathstrut \ell_{1,3}$] {}}
			}
		child { node[fill=colorTwo,label=right:$\mathstrut z_3'$] {}
				child { node[fill=colorFour,label=below:$\mathstrut \ell_{2,3}$] {}}
			};
\end{scope}

\end{tikzpicture}
 		\caption{Clause gadget of type~$A$\\ of the clause $C_3$}
		\label{clause gadget type A}
	\end{subfigure}
    \vspace{10pt}\\
    \begin{subfigure}[b]{0.45\linewidth}
		\centering
		\begin{tikzpicture}[level/.style={sibling distance = 1.75cm/#1,
				level distance = 0.75cm},scale=1]
	\tikzset{every node/.style={draw,circle,fill=gray,minimum size=5pt,inner sep=2.5pt,outer sep=0},
		every label/.append style={rectangle},
		label distance=2pt};

	\begin{scope}[edge from parent/.style={draw}]\node[fill=white,square,label=above:$\mathstrut r_{C_2}$] at (3,2) {}
		child { node[fill=colorNine,label=left:$\mathstrut z_2$] {}
				child {node[fill=colorEleven,label=left:$\mathstrut y_2$] {}
						child { node[fill=colorThree,label=below:$\mathstrut \ell_{1,2}$] {}}
					}
				child {node[fill=colorEleven,label=right:$\mathstrut y'_2$] {}
						child { node[fill=colorSix,label=below:$\mathstrut \ell_{2,2}$] {}}
					}
			}
		child { node[fill=colorNine,label=right:$\mathstrut z'_2$] {}
				child { node[fill=colorOne,label=below:$\mathstrut \ell_{3,2}$] {}}
			};
\end{scope}

\end{tikzpicture}
 		\caption{Clause gadget of type~$B$\\ of the clause $C_2$}
		\label{clause gadget type B}
	\end{subfigure}
	\begin{subfigure}[b]{0.45\linewidth}
		\centering
		\begin{tikzpicture}[level/.style={sibling distance = 1.75cm/#1,
				level distance = 0.75cm},scale=1]
	\tikzset{every node/.style={draw,solid,circle,fill=gray,minimum size=5pt,inner sep=2.5pt,outer sep=0},
every label/.append style={rectangle},
		label distance=2pt};

	\begin{scope}[level/.style={sibling distance = 1.25cm/#1, level distance = 0.95cm}, edge from parent/.style={draw}]\node[fill=white,square,label=above:$\mathstrut r_{C_2}$] at (3,2) {}
		child { node[fill=colorNine,label=left:$\mathstrut z_2$] {}
				child { node[fill=colorThree,label=below:$\mathstrut \ell_{1,2}$] {}}
			}
		child { node[fill=colorNine,label=right:$\mathstrut z_2'$] {}
				child { node[fill=colorSix,label=below:$\mathstrut \ell_{2,2}$] {}}
			}
		child { node[fill=colorNine,label=right:$\mathstrut z_2''$] {}
				child { node[fill=colorOne,label=below:$\mathstrut \ell_{3,2}$] {}}};
\end{scope}

\end{tikzpicture}
 		\caption{Clause gadget of type~$C$\\ of the clause $C_2$}
		\label{clause gadget type C}
	\end{subfigure}
	\caption{An example of variable and clause gadgets used in \cref{from phi to tree} given a $3{,}3$-\textsc{SAT} instance $\phi$.
	The variable $x_1$ appears twice as a positive literal (in $C_2$ and $C_5$) and once as a negative literal (in $C_3$) in $\phi$.
	The clause $C_2$ (of size~$3$) in $\phi$, which can be represented with a clause gadget of type~$B$ or $C$, contains $x_1$ as a positive literal (pink vertex \usebox{\pinkvertex} corresponding to vertex $x_{1,2}$).
	The clause $C_3$ (of size~$2$) in $\phi$, which must be represented with a clause gadget of type~$A$, contains $x_1$ as a negative literal (light blue vertex \usebox{\lightbluevertex} corresponding to vertex $\bar{x}_{1,3}$).
	Squared vertices have a colour that appears only once in the obtained tree~$T$.}\label{figure tree}
\end{figure}

	We now explain which clause gadgets must be used, and how clause and variable gadgets must be connected together, so that the final tree $T$ has the required properties.
	First, create a variable gadget for each variable.
	Also, for each clause containing two literals only, create a clause gadget of type~$A$.
	Then, apply one of the following options:
	\begin{itemize}
		\item \textit{To get $T$ as a $4$\=/caterpillar with maximum degree~$3$:} Create a clause gadget of type~$B$ for each clause containing three literals.
		Then, create a central path with $n+m_3+m_2$ new vertices, and connect all $r_{x_i}$ and $r_{C_j}$ vertices to distinct vertices of the central path.

		\item \textit{To get $T$ as a $3$\=/caterpillar with maximum degree~$4$:} Create a clause gadget of type~$B$ for each clause containing three literals.
		Then, connect all $r_{x_i}$ and $r_{C_j}$ vertices together to create a central path.
		Alternatively, create a clause gadget of type~$C$ for each clause containing three literals, create a central path with $n+m_3+m_2$ new vertices and connect all $r_{x_i}$ and $r_{C_j}$ vertices to distinct vertices of the central path.

		\item \textit{To get $T$ as a $2$\=/caterpillar with maximum degree~$5$:} Create a clause gadget of type~$C$ for each clause containing three literals.
		Then, connect all $r_{x_i}$ and $r_{C_j}$ vertices together to create a central path.
	\end{itemize}

	In all cases, the central path corresponds to the backbone of $T$.
	We set the root $r$ of $T$ such that it belongs to the backbone and has minimum degree, that is, two, three or four children if $T$ has maximum degree $3$, $4$ or $5$, respectively.

	Lastly, we assign a colour to each vertex in $T$.
	For each variable gadget of a variable $x_i$, let $c(x_i) = c(\bar{x}_i)$ be a new colour.
	Also, for each vertex $\widetilde{x}_{i,j} \in \{ x_{i,j}, \bar{x}_{i,j} \}$, let $c(\widetilde{x}_{i,j})$ be a new colour.
	Then, for each clause gadget of clause $C_j$, if the literal $\ell_{k,j} = x_i$ in $C_j$, then set $c(\ell_{k,j}) := c(x_{i,j})$, but if $\ell_{k,j} = \bar{x}_i$, then set $c(\ell_{k,j}) := c(\bar{x}_{i,j})$.
	If a clause gadget is of type~$B$, then let $c(y_j) = c(y'_j)$ and $c(z_j) = c(z'_j)$ be two new colours.
	If a clause gadget is of type~$C$, then let $c(z_j) = c(z'_j) = c(z''_j)$ be a new colour.
	If a clause gadget is of type~$A$, then let $c(z_j) = c(z'_j)$ be a new colour.
    Finally, each remaining uncoloured vertex (root vertex of a variable or clause gadget, or vertex of the central path) is assigned a new (unused) colour.
	If $T$ does not contain clause gadgets of type~$C$, then its colour-multiplicity is~$2$, otherwise it is~$3$.
	An example of variable and clause gadgets with coloured vertices is given in \cref{figure tree}.
\end{construction}

Note that \cref{from phi to tree} can be done in polynomial time.

\begin{theorem}\label{np-hard caterpillars}
	{\CC} and {\CP} are \NP-complete on coloured :
	\begin{itemize}
		\item $2$\=/caterpillars with maximum degree~$5$ and colour-multiplicity $3$,
		\item $3$\=/caterpillars with maximum degree~$4$ and colour-multiplicity $2$, and
		\item $4$\=/caterpillars with maximum degree~$3$ and colour-multiplicity $2$.
	\end{itemize}
\end{theorem}
\begin{proof}
	Obviously, the problems are in \NP.
	Since the problems are equivalent on trees, we prove the statement for {\CC}.
	Let $\phi$ be an instance of $3{,}3$-\textsc{SAT} with $m$ clauses and $n$ variables.
	We denote by $m_3$ the number of clauses containing $3$ literals and by $m_2$ the number of clauses containing $2$ literals.
	We transform $\phi$ into a coloured tree $T$ as described in \cref{from phi to tree}.
	Note that $T$ is either a $2$\=/caterpillar with maximum degree~$5$, a $3$\=/caterpillar with maximum degree~$4$ or a $4$\=/caterpillar with maximum degree~$3$.
	The colour-multiplicity of $T$ depends on the use of clause gadgets of type~$C$, and therefore is $3$ for a $2$\=/caterpillar with maximum degree~$5$ but can only be $2$ for a $3$\=/caterpillar with maximum degree~$4$ or a $4$\=/caterpillar with maximum degree~$3$.
	We claim that there exists a satisfying assignment for $\phi$ if and only if there is a set of exactly $n+2m_3+m_2$ edges in $T$ whose removal makes $T$ colourful.

	Let $\beta$ be a satisfying assignment for $\phi$.
	We define the set of edges $S$ as follows:
	\begin{itemize}
		\item For each variable $x_i$, the set $S$ contains the edge $\{r_{x_i},x_i\}$ if $x_i = True$ in $\beta$, or $\{r_{x_i},\bar{x}_i\}$ if $x_i = False$ in $\beta$ .

		\item For each clause $C_j$ with three literals:
		      \begin{itemize}
			      \item If the clause gadget is of type~$B$, then the set $S$ contains two edges: one from the path between $y_j$ and $y'_j$, and one from the path between $z_j$ and $z'_j$.
			            These edges are chosen in such a way that, in $G-S$, the leaf $\ell_{k,j}$ which belongs to the same connected component as the vertex $r_{C_j}$ corresponds to (one of) the literal(s) satisfying the clause $C_j$ in $\beta$.

			      \item If the clause gadget is of type~$C$, then the set $S$ contains two edges incident to the vertex $r_{C_j}$ such that $z_j$, $z'_j$, and $z''_j$ are in different connected component in $G-S$.
			            These edges are chosen so that, in $G-S$, the leaf $\ell_{k,j}$ which belongs to the same connected component as the vertex $r_{C_j}$ corresponds to (one of) the literal(s) satisfying the clause $C_j$ in $\beta$.
		      \end{itemize}

		\item For each clause $C_j$ with two literals, which is represented by a clause gadget of type~$A$, the set $S$ contains either the edge $\{r_{C_j}, z_j\}$ or the edge $\{r_{C_j},z'_j\}$.
    Again, this edge is chosen in order that, in $G-S$, the leaf $\ell_{k,j}$ which belongs to the same connected component as the vertex $r_{C_j}$ corresponds to (one of) the literal(s) satisfying the clause $C_j$ in $\beta$.
	\end{itemize}
	Clearly, the set $S$ contains $n+2m_3+m_2$ edges.
	We denote by $F$ the forest $T-S$, and by $T'$ the connected component in $F$ containing the root $r$ of $T$.
	Obviously, two vertices of the same colour from a same variable gadget do not both belong to a same connected component of $F$, and the same holds for a clause gadget.
	Also, note that two vertices of different variable gadgets do not have the same colour, and similarly for vertices of different clause gadgets.
	Lastly, observe that two vertices of two different gadgets belong to the same connected component if and only if they are connected through the backbone, which is in $T'$.
	Thus, if there exist two vertices of the same colour in a same connected component of $F$, one is from a variable gadget and the other one from a clause gadget; so both vertices belong to $T'$.
	Without loss of generality, consider $x_{i,j}$ from the variable gadget of $x_i$ and $\ell_{k,j}$ from the clause gadget of $C_j$, such that $x_{i,j}, \ell_{k,j} \in T'$.
	To derive a contradiction, suppose that $c(x_{i,j}) = c(\ell_{k,j})$.
	Note that the literal represented by $\ell_{k,j}$ is $x_{i,j}$, otherwise the two vertices would not have the same colour.
	Since $\ell_{k,j}$ is in $T'$, it is connected to the vertex $r_{C_j}$ of the clause gadget, hence $\ell_{k,j}$ satisfies the clause $C_j$.
	Therefore, the variable $x_i = True$ in $\beta$.
	By construction, this implies that the edge $\{r_{x_i},x_i\}$ belongs to $S$, and that the subtree $T_{x_i}$, containing $x_{i,j}$, is not part of $T'$, which is a contradiction.

	Let $S$ be a solution to {\CC} for $T$.
	Observe that $S$ must contain at least one edge per variable gadget to put the vertices $x_i$ and $\bar{x}_i$, which have the same colour, into different connected components.
  Besides, since each clause gadget of type~$B$ contains two pairs of vertices with the same colours (the pairs $z_i,z'_i$ and $y_i,y'_i$), and each clause gadget of type~$C$ contains three vertices of the same colour ($z_i$, $z'_i$ and $z''_i$), $S$ must contain at least two edges per clause gadget of type~$B$ or type~$C$.
  Lastly, as each clause gadget of type~$A$ contains two vertices with the same colour, $S$ must contain at least one edge per clause gadget of type~$A$.
  Putting everything together, we get that $|S| \geq n+2m_3+m_2$.
  Therefore, if we suppose that $|S| = n+2m_3+m_2$, then $S$ contains exactly one edge per variable gadget, exactly two edges per clause gadget of type~$B$ or type~$C$, and exactly one edge per clause gadget of type~$A$.
  Note that $S$ does not contain any edge from the backbone of $T$.
	Let $T'$ be the connected component of $T-S$ containing the root $r$.
	Notice that, for each variable gadget, either $x_i$ or $\bar{x}_i$ belongs to $T'$, but not both.
	Also, for each clause gadget, exactly one leaf $\ell_{k,j}$ belongs to $T'$.
	We construct an assignment $\beta$ of $\phi$ such that, for each variable gadget, if $\{r_{x_i},x_i\} \in S$, then we set $x_i := True$ in $\beta$, and if $\{r_{x_i},\bar{x}_i\} \in S$, then we set $x_i := False$ in $\beta$.
	Suppose for a contradiction that there is a clause $C_j$ which is not satisfied in $\phi$ with regard to $\beta$.
	Consider the leaf $\ell_{k,j} \in T'$ from the gadget clause of $C_j$, and assume without loss of generality that $\ell_{k,j} = x_{i}$.
	If $C_j$ is not satisfied, then the variable $x_{i} := False$ in $\beta$.
	This means that $S$ contains the edge $\{r_{x_i},\bar{x}_{i}\}$, but not the edge $\{r_{x_i},x_{i}\}$, and thus $x_{i,j} \in T'$.
	However, since $c(x_{i,j}) = c(\ell_{k,j})$, then $S$ is not a solution for $T$, a contradiction.
\end{proof}

\Citeauthor{MR3968574} asked in~\cite{MR3968574} what is the complexity of {\CC} and {\CP} on trees of maximum degree $d$ for $3 \leq d \leq 5$.
\Cref{np-hard caterpillars} answers the question completely, since both problem are \NP\=/complete on $4$-caterpillars with maximum degree $3$. 
\subsection{A linear-time algorithm for \texorpdfstring{$1$\=/}{1-}caterpillars, and more}\label{easy-caterpillars}

In this subsection, we show that {\CC} and {\CP} can be solved in linear time on coloured $1$-pseudocaterpillars, without restriction on the maximum degree.
\begin{definition}[Pseudocaterpillar]
    A \emph{cyclic $1$-caterpillar} is a connected graph with a unique cycle $B$, called the backbone, such that for any $e \in E(B)$, the graph $G-e$ is a $1$-caterpillar.
    A \emph{$1$-pseudocaterpillar} is either a $1$-caterpillar or a cyclic $1$-caterpillar.
    For the sake of simplicity, we refer to $1$-pseudocaterpillars as \emph{pseudocaterpillars}.
\end{definition}

If a coloured pseudocaterpillar is not colourful, then it contains either one or at least two colours that appear more than once.
We deal with these two cases independently in the following lemmas.

\begin{lemma}\label{exactly one colour appears at least twice}
	{\CC} and {\CP} can be solved in linear time on coloured pseudocaterpillars where at most one colour appears at least twice.
\end{lemma}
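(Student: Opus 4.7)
The plan is to reduce both problems to counting the vertices of the unique repeated colour and analysing where they lie on the backbone. Let $c^*$ be the colour (if any) that appears at least twice in the input caterpillar, let $k$ be the number of $c^*$-vertices, and let $b$ be the number of those that lie on the backbone; all three numbers are computed by a single linear scan. If $k \leq 1$ the graph is already colourful and both problems are trivially yes for every admissible $p$, so assume $k \geq 2$ from now on.

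For the lower bound on the minimum number of cut edges $r^*$ I would use the standard identity $c(G - S) = n - m + |S| + \mu(G - S)$ relating the number of connected components of $G - S$ to its cyclomatic number $\mu$. Combined with the fact that each colourful component contains at most one $c^*$-vertex (so $c(G - S) \geq k$), this yields $|S| \geq k + m - n - \mu(G - S)$. For an acyclic caterpillar ($m = n - 1$) it immediately gives $r^* \geq k - 1$. In the cyclic case ($m = n$) the bound depends on $b$: if $b \leq 1$ the backbone cycle itself contains at most one $c^*$-vertex and can be left intact with $\mu(G - S) = 1$, still giving $r^* \geq k - 1$; but if $b \geq 2$ any valid $S$ must remove at least one backbone edge, since otherwise two backbone $c^*$-vertices would remain on a common cycle in $G - S$, and hence $\mu(G - S) = 0$ and $r^* \geq k$. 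The same counting yields the lower bound $\max(1, k)$ on the number of parts for {\CP}.

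For the matching upper bound I would describe a direct linear-time construction that only cuts pendant edges of $c^*$-leaves and backbone edges between consecutive (along the backbone) $c^*$-vertices. When $b \geq 2$, cut one backbone edge between each pair of consecutive backbone $c^*$-vertices (giving $b - 1$ cuts in the acyclic case and $b$ in the cyclic case) and cut the pendant of every $c^*$-leaf; when $b = 1$, cut only the pendants of the $c^*$-leaves; when $b = 0$, cut the pendants of all $c^*$-leaves except one. A short verification in each sub-case shows that the number of cuts equals the lower bound from the previous paragraph and that every resulting connected component contains exactly one $c^*$-vertex, producing $k$ colourful components. The resulting partition also realises the {\CP} optimum of $k$ parts.

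The main step requiring care is the cyclic case split on $b$, since the cycle interacts nontrivially with the placement of the $c^*$-vertices; every other step is a routine single pass over the graph (locating the backbone, listing the $c^*$-leaves and the positions of consecutive backbone $c^*$-vertices, performing the cuts). The whole algorithm therefore runs in linear time and returns the optimum, which is compared with $p$ to decide both problems.
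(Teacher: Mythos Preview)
Your proof is correct, and the case split (acyclic vs.\ cyclic backbone, and within the cyclic case $b\le 1$ vs.\ $b\ge 2$) as well as the optimal values you obtain ($k-1$ cut edges, except $k$ in the cyclic $b\ge 2$ case, and $k$ parts for {\CP}) coincide with the paper's.

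The route, however, is somewhat different. You derive all lower bounds uniformly from the identity $c(G-S)=n-m+|S|+\mu(G-S)$ combined with $c(G-S)\ge k$, and then exhibit an explicit cut set in every sub-case. The paper instead argues the lower bounds by direct counting and, for the cyclic $b\ge 2$ case, \emph{reduces to the acyclic case}: it removes an arbitrary backbone edge $e$, solves optimally on the resulting tree with $k-1$ edges, and argues that adding $e$ back gives an optimal size-$k$ solution (optimality because any cycle-case solution contains a backbone edge, and deleting it would yield a tree-case solution of size $<k-1$, a contradiction). Your cyclomatic-number approach is more systematic and makes the role of the unique cycle explicit; the paper's reduction is terser but leaves more details implicit.

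One minor wording issue: in the cyclic $b\le 1$ paragraph you say the cycle ``can be left intact with $\mu(G-S)=1$, still giving $r^*\ge k-1$''. The lower bound $r^*\ge k-1$ actually holds for \emph{every} valid $S$ simply because $\mu(G-S)\le\mu(G)=1$; the observation that the cycle can be left intact is relevant only for the matching upper bound. This does not affect correctness.
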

\begin{proof}
	Let $G=(V,E)$ be a coloured pseudocaterpillar such that there is at most one colour appearing at least twice in~$G$.
	If each colour appears at most once, then $G$ is colourful.
	So let $\gamma$ be a colour that appears $n_\gamma \geq 2$ times in~$G$.

	Notice that if the backbone is a path, then an optimal solution to {\CP} with $n_\gamma$ parts can be found in linear time.
	Clearly, the $n_\gamma-1$ edges with endpoints in different parts form an optimal solution to {\CC} and can also be obtained in linear time.

	Suppose that the backbone induces a cycle and let $b_\gamma$ be the number of vertices of colour $\gamma$ on the backbone.
	It is easy to see that if $b_\gamma \leq 1$, the removal of $n_\gamma - 1$ edges is necessary and sufficient.
	Assume now that $b_\gamma \geq 2$.
	Since at least two vertices of colour $\gamma$ belong to the backbone (which is a cycle), any solution to {\CC} must contain at least $2$ edges from the backbone.
	Remove any edge $e$ from the backbone and let $G'$ be the resulting graph.
	Since the backbone of $G'$ is a path, $G'$ is a tree, and thus any optimal solution $S'$ to {\CC} on $G'$ contains exactly $n_\gamma-1$ edges.
	Of course, $S' \cup \{e\}$ is a solution for $G$ that can be obtained in linear time and contains at most $n_\gamma$ edges.
	We claim that such a solution is optimal.
	Let $S$ be a solution to {\CC} for $G$.
	To obtain a contradiction, suppose that $|S| < n_\gamma$.
	Let $e \in S$ be an edge on the backbone (there exist at least two such edges in $S$), and let $G'$ be the graph obtained by removing $e$ from~$G$.
	Of course, $G'$ is connected and has a path as backbone.
	Since $S$ is a solution to {\CC} for $G$, then $S' = S \setminus \{e\}$ is a solution for $G'$ of size $|S'| = |S|-1 < n_\gamma-1$, which is impossible since $G$ is connected and contains $n_\gamma$ vertices of colour~$\gamma$.

	Clearly, the partition of the vertices defined by each connected component is an optimal solution to {\CP} since each component contains exactly one vertex of colour~$\gamma$.
\end{proof}

We now deal with the case where the input pseudocaterpillar contains at least two colours that appear more than once.
We consider the vertices of the backbone as internal vertices of stars whose leaves are the adjacent vertices with degree $1$.
When the backbone is a path, we assume it is of minimum length.
This in particular implies that, if the input pseudocaterpillar contains at least $3$ vertices, then there is a bijection between vertices of the backbone and vertices with degree at least $2$, and conversely a bijection between leaves of the stars and vertices with degree $1$.

\begin{definition}[{$C[u]$}]
	Given a coloured pseudocaterpillar $G$ with backbone $B$ and a vertex $u \in V(B)$, we denote by $C[u]$ the multiset of colours used by all the vertices of the star having $u$ as internal vertex.
	Note that using a breadth-first search, the value of $C[u]$ can be computed in linear time for all $u \in V(B)$.
\end{definition}

\begin{remark}\label{preprocessing}
	Consider a coloured pseudocaterpillar $G$ with backbone $B$ and a vertex $u \in V(B)$.
	If $C[u]$ contains a colour $\gamma$ more than once, then at least one vertex with colour $\gamma$ in the corresponding star is a leaf which must belong to a different colourful component than $u$.
	Hence, $G$ can be preprocessed in such a way that, for each such leaf, we add its adjacent edge to a set $S_p$.
	This procedure is repeated until there is no such leaf in $G-S_p$.
	At the end of the preprocessing, each star of $G-S_p$ is a \emph{colourful star}, that is, only contains vertices with different colours.
	See an example of such a preprocessing in \cref{preprocessed cyclic pseudocaterpillar}.
\end{remark}

A \emph{coloured pseudocaterpillar with colourful stars} is a coloured pseudocaterpillar whose all stars are colourful.

\begin{figure}[hbpt]
	\centering
	\begin{tikzpicture}[scale=0.95]
	\tikzset{every node/.style={draw,circle,fill=gray,inner sep=2.5pt,outer sep=0},
		every label/.append style={rectangle},
		label distance=10pt};

	\node[fill=colorSeven] (c1) at (0,0) {};
	\node[fill=colorNine] (c2) at ($(c1) + (45:1)$) {};
	\node[fill=colorEleven] (c3) at ($(c2) + (10:1)$) {};
	\node[fill=colorSeven] (c4) at ($(c3) + (-10:1)$) {};
	\node[fill=colorEight] (c5) at ($(c4) + (-45:1)$) {};
	\node[fill=colorFive] (c6) at ($(c5) + (-135:1.25)$) {};
	\node[fill=colorEleven] (c7) at ($(c6) + (-180:1.5)$) {};

	\draw[] (c1) -- (c2);
	\draw[] (c2) -- (c3);
	\draw[] (c3) [] -- (c4);
	\draw[] (c4) -- (c5);
	\draw[] (c5) [] -- (c6);
	\draw[] (c6) [] -- (c7);
	\draw[] (c7) [] -- (c1);

	\node[fill=colorOne] (d1) at ($(c1) + (160:1)$) {};
	\node[fill=colorOne] (e1) at ($(c1) + (200:1)$) {};
	\draw[] (c1) [dotted] -- (d1);
	\draw[] (c1) -- (e1);

	\node[fill=colorEight] (d2) at ($(c2) + (115.5:1)$) {};
	\draw[] (c2) -- (d2);

	\node[fill=colorFive] (d4) at ($(c4) + (97.5:1)$) {};
	\node[fill=colorOne] (e4) at ($(c4) + (67.5:1)$) {};
	\node[fill=colorSeven] (f4) at ($(c4) + (37.5:1)$) {};
	\draw[] (c4) -- (d4);
	\draw[] (c4) -- (e4);
	\draw[dotted] (c4) -- (f4);

	\node[fill=colorEleven] (d5) at ($(c5) + (0:1)$) {};
	\draw[] (c5) -- (d5);

	\node[fill=colorNine] (d6) at ($(c6) + (-47.5:1)$) {};
	\node[fill=colorSeven] (e6) at ($(c6) + (-87.5:1)$) {};
	\draw[] (c6) -- (d6);
	\draw[] (c6) -- (e6);

	\node[fill=colorEight] (d7) at ($(c7) + (-180+22.5:1)$) {};
	\node[fill=colorNine] (e7) at ($(c7) + (-180+52.5:1)$) {};
	\node[fill=colorOne] (f7) at ($(c7) + (-180+82.5:1)$) {};
	\node[fill=colorEight] (g7) at ($(c7) + (-180+112.5:1)$) {};
	\draw[dotted] (c7) -- (d7);
	\draw[] (c7) -- (e7);
	\draw[] (c7) -- (f7);
	\draw[] (c7) -- (g7);

\end{tikzpicture}
 	\caption{A $6$\=/coloured pseudocaterpillar. Dotted edges belong to all solutions to {\CC} (up to isomorphism) and are removed from the graph in the preprocessing.}
	\label{preprocessed cyclic pseudocaterpillar}
\end{figure}

\begin{definition}[Bad path]
	A path $P$ in a coloured graph $G$ between two distinct vertices $u$ and $v$ of colour $\gamma$ is called a \emph{$\gamma$-bad path}.
	A \emph{bad path} is a \emph{$\gamma$-bad path} for some colour~$\gamma$.
\end{definition}

Note that a graph is colourful if and only if it does not contain a bad path.

We say that a solution to $\CC$ is \emph{optimal} if it is a solution of minimum size.

\begin{lemma}\label{at least two colours appear twice}
	Let $G$ be a coloured pseudocaterpillar with colourful stars such that at least two colours appear at least twice in~$G$.
	Then there exists an optimal solution $S$ to {\CC} for $G$ such that $S \subseteq E(B)$, where $B$ is the backbone of~$G$.
\end{lemma}
\begin{proof}
    Note that, since every star of $G$ is colourful and at least two colours appear at least twice in $G$,	the backbone $B$ must contain at least $2$ vertices.
	For simplicity, let $x_0, \dots, x_t$ be the vertices in $V(B)$ ordered such that $\{x_i, x_{i+1}\} \in E(B)$ for all $i \in \{0,\dots,t-1\}$, and $\{x_t, x_0\} \in E(B)$ when $B$ is a cycle.

	It is easily observed that if the backbone $B$ is a path, then there exists an optimal solution $S$ to {\CC} for $G$ such that $S \subseteq E(B)$.
	For instance, one can construct such an optimal solution $S$ as follows:
	let $S = \emptyset$ and $Col = \emptyset$;
	traverse $B$ starting at $x_0$;
	each time a vertex $x_i$ is visited ($x_0$ being the first visited vertex), if $C[x_i] \cap Col \neq \emptyset$, then add the edge $\{x_{i-1},x_i\}$ to $S$ and set $Col := C[x_i]$;
	otherwise, set $Col := Col \cup C[x_i]$.
	In particular, all along the algorithm, $Col$ contains the colours used by the vertices of the stars whose internal vertices are in the same connected component as the newly visited vertex $x_i$ in $G-S$.
	Clearly, once all the vertices of $B$ have been visited, the set $S$ is an optimal solution for~$G$.

	So we can focus on the case where $B$ is a cycle.
	Let $S$ be an optimal solution to {\CC} for $G$ maximising $|S \cap E(B)|$.
	We assume that $S \nsubseteq E(B)$, otherwise the statement is true.
	We consider two cases.

    Suppose first that there exists an edge $f \in S \cap E(B)$.
    Then the backbone $B'$ of $G' = G - \{f\}$ is a path, and thus there exists an optimal solution $S'$ for $G'$ such that $S' \subseteq E(B')$.
	Obviously, $S' \cup \{f\}$ is a solution to {\CC} for $G$.
	We claim that $S' \cup \{f\}$ is an optimal solution for~$G$.
	If not, then $|S| \leq |S'|$.
	This implies that $S \setminus \{f\}$ is a solution for $G'$ of size $|S|-1 < |S'|$, contradicting the optimality of~$S'$.
	So $S' \cup \{f\}$ is an optimal solution for $G$.
	However, as $S' \cup \{f\} \subseteq E(B)$, this contradicts the choice of $S$.

	Suppose now that $S \cap E(B) = \emptyset$.
	Recall that at least two colours $\alpha$ and $\beta$ appear at least twice in~$G$, and thus there exists at least one $\alpha$-bad path and at least one $\beta$-bad path in~$G$.
	Since $S \cap E(B) = \emptyset$, no edge in $S$ can belong to an $\alpha$-bad path and a $\beta$-bad path.
	Thus, $S$ contains two distinct edges $e = \{x_i,w\}$ and $f = \{x_j,y\}$ with $x_i,x_j \in V(B)$ and $w,y \notin V(B)$, and such that $c(w) = \alpha$ and $c(y) = \beta$ (note that we may have $x_i = x_j$).
	If $x_i = x_j$, then the set $S' = (S \setminus \{e,f\}) \cup \{e',f'\}$ where $e'$ and $f'$ are the two distinct edges of $B$ incident to $x_i (= x_j)$ is also a solution for $G$.
	Besides, either $|S'| < |S|$, which contradicts the optimality of $S$, or $|S'| = |S|$ and $|S' \cap E(B)| > |S \cap E(B)|$, which contradicts the choice of $S$.
	So we conclude that $x_i \neq x_j$.
	Since $S$ is an optimal solution, there exist exactly two distinct vertices $w',y' \in V(G) \setminus \{w,y\}$ in the same connected component as $x_i$ and $x_j$ in $G-S$ such that $c(w') = c(w) = \alpha$ and $c(y') = c(y) = \beta$.
	Denote by $x_k$ the internal vertex of the star containing $w'$ and by $x_\ell$ the internal vertex of the star containing $y'$.
	Note that we may have $x_k = x_{\ell}$; if $x_k \neq x_\ell$, then we may have $x_i = x_\ell$ or $x_j = x_k$.
	Let $P_k$ be a path in $B$ between $x_i$ and $x_j$ such that $x_k \in V(P_k)$, and $e' \in E(P_k)$ the edge incident to $x_i$.
	Similarly let $P_\ell$ be a path in $B$ between $x_i$ and $x_j$ such that $x_{\ell} \in V(P_\ell)$, and $f' \in E(P_\ell)$ the edge incident to $x_j$.
    If $x_i = x_\ell$ or $x_j = x_k$, then the paths $P_k$ and $P_\ell$ are not uniquely defined.
	In this case, we make sure to choose $P_k$ and $P_\ell$ such that one is not a subgraph of the other (which is always possible since $x_i \neq x_j$ and $B$ is a cycle).
	We claim that $S' = (S \setminus \{e,f\}) \cup \{e',f'\}$ is an optimal solution for~$G$.
	First, observe that, for any colour $\gamma \notin \{\alpha,\beta\}$, there are no $\gamma$-bad paths in any connected component of $G-S'$.
    Indeed, if there were an $\alpha$-bad path or a $\beta$-bad path $P$ in $G-S'$, then, necessarily, $w$ or $y$ would be an endpoint of $P$.
	However, the fact that $e', f' \in S'$ guarantees that $x_i$ and $x_k$ are in two different connected component of $G-S'$, and the same holds for $x_j$ and $x_{\ell}$.
	In particular, $w$ is the only vertex of colour $\alpha$ in its connected component of $G-S'$ and, similarly, $y$ is the only vertex of colour $\beta$ in its connected component of $G-S'$.
	Hence, $S'$ is a solution for $G$ such that $|S'| \leq |S|$ and $|S' \cap E(B)| > |S \cap E(B)|$, a contradiction with the choice of $S$.

	Thus, we conclude that $S \subseteq E(B)$.
\end{proof}

In order to obtain a linear-time algorithm, we define a special type of bad paths, called \emph{critical bad paths}.

\begin{definition}[Critical bad path]
	Let $G$ be a coloured pseudocaterpillar with backbone $B$ and colourful stars.
	A \emph{$\gamma$-critical bad path} $P$ is a maximal path in $B$ such that for all internal vertex $w$ of $P$ it holds $\gamma \notin C[w]$.
	A \emph{critical bad path} is a $\gamma$-critical bad path for some colour~$\gamma$.
	Note that two $\gamma$-critical bad paths do not have common edges.
\end{definition}

\begin{remark}\label{critical bad paths are enough}
	Let $G$ be a coloured pseudocaterpillar with colourful stars such that at least two colours appear twice.
	Let $B$ be the backbone of~$G$.
	Since $G$ has colourful stars, each bad path, and therefore each critical bad path, contains at least one edge in $B$.
  It is easily observed that if a set of edges $S \subseteq B$ contains at least one edge from each critical bad path, then $S$ contains at least one edge of each bad path, and thus $S$ is a solution to {\CC}.
  Moreover, since the existence of a bad path implies the existence of a critical bad path, if $S$ is of minimum size, then according to \cref{at least two colours appear twice} $S$ is an optimal solution to {\CC}.
\end{remark}

We observe that, although the number of bad paths can be quadratic in the number of vertices, there is only a linear number of critical bad paths, and thus a linear number of edges from the backbone that have to be removed from the input graph to make it colourful.
This observation together with \cref{critical bad paths are enough} allows us to develop a linear-time algorithm on coloured pseudocaterpillars.

The idea of the algorithm is to define a circular-arc graph $H$ (an intersection graph of a collection of arcs on the circle) based on the critical bad paths of~$G$.
\Citeauthor{MR1143909} showed that a minimum clique cover $\mathcal{Q}$ of $H$, which is a partition of the vertex set into a minimum number of cliques, can be obtained in linear time~\cite{MR1143909}.
We show that $\mathcal{Q}$ can then be translated back into an optimal solution to {\CC} and {\CP} for $G$ in linear time.

\begin{algorithm}[htb]
	\KwIn{An $\ell$-coloured pseudocaterpillar with colourful stars $G$ and backbone $B$.}
	\KwOut{A multiset of ordered pairs of vertices.}
	\tcp{Initialisation}
	let $A$ be an empty multiset of ordered pairs of vertices\;
	let $L$ be an array of length $\ell$ initialised at $NULL$\;
	let $x_0, \dots, x_p \in V(B)$ s.t. $\forall i \in \{1,\dots,p-1\}$, $\{x_i,x_{i+1}\} \in E(B)$\;
	let $end := NULL$\;
	let $proceed := True$\;
	let $i := 0$\;
	\tcp{Main procedure}
	\While(\tcp*[f]{$\O(n)$}){$proceed$}{
		\If(\tcp*[f]{no more critical bad path}){$x_i = end$}{
			$proceed := False$\tcp*{$B$ is a cycle}\label{second visit of end stop}
		}
		\ForEach{$\gamma \in C[x_i]$}{
			\If(\tcp*[f]{first time colour $\gamma$ is seen}){$L[\gamma] = NULL$}{\label{first time}
				$end := x_i$\;\label{vertex end updated}
			}
			\ElseIf(\tcp*[f]{critical bad path detected}){$L[\gamma] \neq x_i$}{\label{bad path detected}
				add $(L[\gamma],x_i)$ to $A$\;\label{arc added}
			}
			$L[\gamma] := x_i$\;
		}
		\If{$i = p$ and $\{x_p,x_0\} \notin E(B)$}{
			$proceed := False$\tcp*{$B$ is a path}\label{end of backbone stop}
		}
		$i := (i+1) \mod (p+1)$\;
	}
	return $A$\;
	\caption{From coloured pseudocaterpillar to ordered pairs.}\label{from pseudocaterpillar to circular-arc graph}
\end{algorithm}

\begin{lemma}\label{bad path equiv arc}
	Let $G$ be a coloured pseudocaterpillar with colourful stars, and $A$ be the multiset of pairs returned by \cref{from pseudocaterpillar to circular-arc graph}.
	Then there is a bijection between the set of critical bad paths in $G$ and the multiset $A$.
\end{lemma}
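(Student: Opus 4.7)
The plan is to establish the bijection by first giving a concrete combinatorial description of the colour-critical bad paths, and then reading \cref{from caterpillar to circular-arc graph} as a streaming enumeration of exactly that description.

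For the structural step, for each colour $\gamma$ occurring at least twice I would call a backbone vertex $a$ a \emph{$\gamma$-anchor} if either $a$ itself or one of its leaves is coloured $\gamma$. Because $G$ has colourful stars, each star contains at most one $\gamma$-vertex, so the $\gamma$-anchors are well defined and in one-to-one correspondence with the $\gamma$-coloured vertices. List them as $a_1^\gamma, a_2^\gamma, \dots, a_{k_\gamma}^\gamma$ in the order in which the backbone walk of the algorithm visits them. I would then show that the colour-critical bad paths of colour $\gamma$ are in bijection with the \emph{consecutive pairs} of $\gamma$-anchors in this ordering: the $k_\gamma-1$ pairs $(a_i^\gamma, a_{i+1}^\gamma)$ if the backbone is a path, plus the wrap-around pair $(a_{k_\gamma}^\gamma, a_1^\gamma)$ if it is a cycle. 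The colourful-star property and the absence of $\gamma$-vertices between two consecutive anchors ensure these really are bad paths, pairwise backbone-edge-disjoint; any non-consecutive bad path would share backbone edges with some consecutive bad path and so fail the edge-disjointness requirement of colour-criticality.

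For the algorithmic step I would interpret the array $L$ as storing, at every moment, the most recently visited $\gamma$-anchor for each colour $\gamma$. Whenever a vertex $v \in N[u]$ with $c(v) = c$ is processed and $L[c]$ is set to some vertex distinct from $u$, the pair $(L[c], u)$ is appended to $A$ and $L[c]$ is refreshed to $u$. A direct induction on the position along the backbone walk then shows that, for each colour $\gamma$, the pairs added during the first revolution of the backbone are exactly $(a_1^\gamma, a_2^\gamma), (a_2^\gamma, a_3^\gamma), \dots, (a_{k_\gamma - 1}^\gamma, a_{k_\gamma}^\gamma)$ with no repetition, which already settles the bijection in the path-backbone case.

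The hard part is the cyclic case, where the wrap-around pair $(a_{k_\gamma}^\gamma, a_1^\gamma)$ is not produced during a single revolution and must come from continuing the walk past the starting vertex. I would verify that the variable $end$ stabilises, by the end of the first revolution, to the last backbone vertex at which some colour was seen for the first time, and that each wrap-around pair $(a_{k_\gamma}^\gamma, a_1^\gamma)$ with $k_\gamma \geq 2$ is produced exactly once, namely the first time the walk re-visits $a_1^\gamma$, because at that instant $L[\gamma] = a_{k_\gamma}^\gamma \neq a_1^\gamma$. The delicate point is to check that the halting condition at $u = end$ inside the \textbf{ElseIf} branch fires precisely after all such wrap-around pairs have been produced and before any pair is duplicated; this requires a short case analysis on whether the starting vertex itself carries a colour occurring more than once and on whether $end$ coincides with such a first anchor, together with the observation that the \textbf{ElseIf} branch at $u = end$ is triggered exactly when the wrap-around pair of the colour whose first occurrence pinned $end$ is being emitted. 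Combining the two passes then yields the claimed bijection between the colour-critical bad paths of $G$ and the multiset $A$.
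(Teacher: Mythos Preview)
Your approach is considerably more explicit than the paper's: the paper argues only, in two short paragraphs, that each colour-critical bad path gives rise to a pair in $A$ and conversely, without verifying that these maps are mutual inverses or addressing the cyclic stopping condition at all. Your anchor characterisation and the induction along the first revolution give a clean and strictly more detailed argument for the path-backbone case than the paper offers, and you are right to flag the cyclic termination as the delicate point that the paper glosses over.

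The gap lies in the cyclic case. You assert that the halting test $u=end$ fires ``precisely after all such wrap-around pairs have been produced and before any pair is duplicated,'' but the second half of this claim does not survive your proposed case analysis. Take a cyclic backbone $u_1,u_2,u_3,u_4$ with colour $\gamma$ on $u_1,u_2$ and colour $\delta$ on $u_3,u_4$, starting the walk at $u_1$. After the first revolution $end$ stabilises to $u_3$ (the last vertex at which some colour is seen for the first time). The second revolution then visits $u_1$ (emitting the $\gamma$ wrap-around pair $(u_2,u_1)$) and next $u_2$, where $L[\gamma]=u_1\neq u_2$, so the pair $(u_1,u_2)$ is emitted a \emph{second} time before the walk ever reaches $end=u_3$ and halts. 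What you would actually need is that every second anchor $a_2^\gamma$ lies strictly after $end$ in the walk order, and this is false in general. The paper's own proof does not touch this issue either, so your proposal is not weaker than the paper here; but the specific behavioural claim you make about the algorithm in the cyclic case is incorrect, and the ``short case analysis'' you outline cannot establish it.
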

\begin{proof}
	Let $B$ be the backbone of $G$.
	A critical-bad path $P$ between two distinct vertices $x_j,x_i \in V(B)$ is detected in \cref{from pseudocaterpillar to circular-arc graph} at \cref{bad path detected}, when the algorithm detects that there exist a colour $\gamma \in C[x_i]$ such that $L[\gamma] \notin \{ NULL, x_i\}$.
	When $x_i$ is considered in the algorithm, the pair $(L[\gamma],x_i)$ is added to $A$ at \cref{arc added}, and since $L\gamma] = x_j$, then $(x_j,x_i) \in A$.
	Note that $P$ is a $\gamma$-critical bad path, since $x_j = L[\gamma]$ corresponds to the last visited vertex, before $x_i$, such that $\gamma \in C[x_j]$.
	Therefore, the arc with endpoints $(x_j,x_i) \in A$ corresponds to the critical bad path $P$ between $x_j$ and $x_i$ in~$G$.

	An ordered pair $(x_j,x_i)$ in $A$ refers to two vertices $x_i$ and $x_j$ in $V(B)$.
	If such a pair exists, then there exists a colour $\gamma$ such that $\gamma \in C[x_j] \cap C[x_i]$.
	Observe that when $(x_j,x_i)$ is added to $A$, it holds $L[\gamma] = x_j$.
	This implies that the path $P$ between $x_j$ and $x_i$ in $B$, with respect to the order in which the vertices in $V(B)$ are visited, does not contain an internal vertex $x_k$ such that $\gamma \in C[x_k]$.
	Thus, $P$ is a $\gamma$-critical bad path and it corresponds to the pair $(x_j,x_i)$ in~$A$.
\end{proof}

\begin{lemma}\label{algo is linear}
	\Cref{from pseudocaterpillar to circular-arc graph} runs in linear time.
\end{lemma}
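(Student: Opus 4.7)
The plan is to separately bound the costs of the initialisation, the number of iterations of the outer \textbf{while}-loop, and the work of the inner \textbf{foreach}-loop performed at a single iteration, and then combine these bounds.

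I would begin with the initialisation. Building the ordered set $U$ of backbone vertices takes $\O(|B|)$ time by one traversal of $B$, allocating and initialising the array $L$ of length $\ell$ takes $\O(\ell)$ time, and all remaining set-up is constant time. Since $|B|, \ell \leq n := |V(G)|$, the initialisation costs $\O(n)$.

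Next I would bound the number of iterations of the outer loop. In the linearly-ordered case (backbone is a path), the loop exits as soon as $u$ becomes the maximum element of $U$, which happens after at most $|B|$ iterations. The cyclic case is the main obstacle, since here the only termination mechanism is the test $u = end$ inside the collision branch. The essential observation is that $temp\_end$ is updated only in the branch $L[c(v)] = NULL$, which can be entered at most once per colour, hence at most $\ell$ times in total and exclusively within the first full revolution around the cycle. From that point on, $end$ is frozen at some fixed backbone vertex $u^\star$, and I would argue that within at most $|B|$ further iterations the test $u = end$ triggers the stop. Consequently the outer loop performs $\O(|B|)$ iterations in total.

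For each outer iteration at $u$, the inner \textbf{foreach}-loop ranges over $u$ together with its pendant neighbours, a total of $\deg_G(u)$ vertices, and performs only $\O(1)$ work per vertex (array lookup, comparison, assignment, and possibly an append to the multiset $A$, each being constant-time assuming $A$ is implemented as a linked list or dynamic array). Summing over the $\O(|B|)$ outer iterations and using $\sum_{u \in B} \deg_G(u) \leq 2|E(G)| = \O(n)$ (every edge of the caterpillar is incident to at least one backbone vertex), the total inner work is $\O(n)$. Combined with the $\O(n)$ initialisation, this yields the claimed linear running time.
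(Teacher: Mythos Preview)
Your proposal is correct and follows essentially the same approach as the paper's proof: both argue the linear case terminates after one pass and, for the cyclic case, that $end$ stabilises at the last backbone vertex where a new colour was detected, so the second visit to that vertex triggers the stop. Your version is in fact more thorough than the paper's, which omits the explicit accounting for the initialisation and for the inner \textbf{foreach} via the degree-sum bound $\sum_{u\in B}\deg_G(u)=\O(n)$; the paper simply asserts the termination claim and concludes.
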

\begin{proof}
	Let $G$ be a coloured pseudocaterpillar with colourful stars and backbone $B$, and $A$ the multiset of ordered pairs obtained by \cref{from pseudocaterpillar to circular-arc graph} with input~$G$.

	In \cref{from pseudocaterpillar to circular-arc graph}, when a colour is seen for the first time at \cref{first time}, the internal vertex $u$ of the star is stored in the variable $end$.
	If the backbone induces a cycle, then the second time that the vertex $end$ is considered in the main loop the algorithm sets the variable $proceed$ to $False$ at \cref{second visit of end stop}.
	If the backbone is a path, then algorithm considers each vertex exactly once and sets the variable $proceed$ to $False$ at \cref{end of backbone stop}.
	Thus, \cref{from pseudocaterpillar to circular-arc graph} runs in linear time.
\end{proof}

Before proving \cref{CC and CP linear time on pseudocaterpillars}, we first need to define the notions of \emph{clique cover} and \emph{circular-arc graph}.

\begin{definition}[Clique cover]
	A \emph{clique cover} of a graph $H$ is a partition $\mathcal{Q}$ of $V(H)$ such that for each $Q_i \in \mathcal{Q}$, $Q_i$ is a clique.
	A \emph{minimum clique cover} is a clique cover containing a minimum number of cliques.
\end{definition}

\begin{definition}[Circular-arc graph]
	A \emph{circular-arc graph} $H$ is the intersection graph of a set of arcs on the circle, that is, each vertex of $H$ can be represented as an arc on the circle and there is an edge between two vertices if and only if the corresponding two arcs intersect.
\end{definition}

Finally, we prove the main result of this section.

\begin{theorem}\label{CC and CP linear time on pseudocaterpillars}
	{\CC} and {\CP} can be solved in linear time on coloured pseudocaterpillars.
\end{theorem}
\begin{proof}
	We note that checking whether a given graph is colourful can be done in linear time, for instance with a breadth-first search.

	Let $G$ be a coloured pseudocaterpillar with backbone $B$.
	First, we prove that a solution to {\CC} for $G$ can be found in linear time.
	We apply the preprocessing to $G$, as defined in \cref{preprocessing}, and denote by $S_p$ the set of edges that have been removed.
	Hence, $G-S_p$ contains colourful stars.
	If $G-S_p$ is colourful, then $S_p$ is an optimal solution to {\CC}.
	Otherwise, denote by $G'$ the connected component of $G-S_p$ containing the backbone $B$.
	If $G'$ contains exactly one colour that appears more than once, then according to \cref{exactly one colour appears at least twice} {\CC} and {\CP} can be solved in linear time.
	Therefore, we assume that $G'$ contains at least two colours that appear at least twice.
	Let $A$ be the multiset of ordered pairs obtained by \cref{from pseudocaterpillar to circular-arc graph} with input $G'$.
	According to \cref{algo is linear}, $A$ can be obtained in linear time.

	Following \cref{bad path equiv arc}, each ordered pair $(x,y)$ in $A$ corresponds to a critical bad path $P$ from $x$ to $y$ in~$G$.
	Using the algorithm of \citeauthor{MR1143909}~\cite{MR1143909}, we obtain a minimum clique cover $\mathcal{Q}$ of the circular-arc graph $H$ represented by $A$ in linear time.
	As mentioned in~\cite{MR1143909}, there are two types of cliques in a circular-arc graph.
	The first type of cliques contains three arcs which do not contain a common point of the circle.
	The second type of cliques contain a common point of the circle and are called \emph{linear cliques}.
	It is proved that in any circular-arc graph, there exists a minimum clique cover made of linear cliques only, unless the graph is a complete graph and the unique maximal clique is not linear.
	Hence, either all cliques in $\mathcal{Q}$ are linear or the unique clique in $\mathcal{Q}$ is not linear.
	In the following, we first deal with the case where $\mathcal{Q}$ contains linear cliques and then consider the case where the unique clique in $\mathcal{Q}$ is not linear.

	Suppose that all cliques in $\mathcal{Q}$ are linear.
	Let $S'$ be an empty set of edges.
	Choose a clique $Q_i \in \mathcal{Q}$.
	From our construction of $A$, and thus of $H$, each vertex $z \in Q_i$ corresponds to a critical bad path $P_z$ in~$G$.
	Let $D_i := \bigcap_{z \in Q_i} P_z$, and notice that, since $Q_i$ is linear, $|D_i \cap B| >0$.
	Then, choose an edge $e \in D_i \cap B$, and add $e$ to $S'$.
	We claim that, once each clique in $\mathcal{Q}$ has been processed, that is, when $|S'| = |\mathcal{Q}|$, the set $S'$ is an optimal solution to {\CC} on~$G$.
	Notice that $S'$ can be computed in linear time.
	As stated before, each critical bad path in $G'$ maps to an ordered pair in $A$, which corresponds to a vertex of $H$.
	Hence, a linear clique $Q_i$ of $H$ corresponds to a set of critical bad paths sharing a common subpath $D_i$.
	The set $S'$ contains an edge in $D_i \cap B$ for each $Q_i \in \mathcal{Q}$, and hence there is no critical bad path in $G'-S'$.
	Since $S' \subseteq B$ and each critical bad path has an edge in $S'$, then by \cref{critical bad paths are enough} $S'$ is a solution to {\CC} on $G'$.
	Moreover, since $\mathcal{Q}$ is minimum, $S'$ is an optimal solution for $G'$.
	Thus, the set $S := S_p \cup S'$ is an optimal solution to {\CC} on~$G$.

	Now, suppose that the unique clique in $\mathcal{Q}$ is not linear, which implies that $H$ is a complete graph.
	As mentioned before, there exist at least three arcs that do not have a common point on the circle.
	Let $z$ denote one of these arcs such that it does not strictly contain any other arc.
	Clearly, since $H$ is a complete graph, $z$ overlaps every other arc.
	Also, since $z$ does not strictly contain any other arc, if one extremity of $z$ does not overlap another arc $z'$, then its other extremity must overlap $z'$.
	Let $P_z$ be the colour-minimal bad path, with endpoints $u$ and $v$, corresponding to $z$.
	Denote by $u'$ and $v'$ the neighbours of $u$ and $v$ in $P_z$, respectively.
	Then, the set $S' := \{\{u,u'\} , \{v,v'\}\}$ is a solution to {\CC} on $G'$.
	Since the intersection of all the critical bad paths in $G'$ is empty, any solution to {\CC} on $G'$ contains at least $2$ edges.
	Hence, $S'$ is an optimal solution and $S := S_p \cup S'$ is an optimal solution to {\CC} on~$G$.

	Finally, we can detect each connected component of $G-S$ in linear time (for instance, with a breadth-first search).
	Thus, we can construct the partition $\pi$ of $V$ such that each part corresponds to a connected component of $G-S$ in linear time.
	Obviously, $\pi$ is a solution to {\CP} on~$G$.
	Since $S$ is optimal, the partition $\pi$ is optimal.
\end{proof}

Note that, for any two critical bad paths $P$ and $P'$, if $E(P') \subset E(P)$, then $P$ does not have to be represented.
While this observation can help decreasing the number of vertices in the circular-arc graph, it does not affect the overall worst-case linear-time complexity of the algorithm.

\begin{figure}[htb]
	\centering
	\begin{tabular}{C{0.45\textwidth}C{0.45\textwidth}}
		\begin{tikzpicture}[scale=.95]
	\tikzset{every node/.style={draw,circle,fill=gray,inner sep=2.5pt,outer sep=0},
		every label/.append style={rectangle},
		label distance=10pt};

	\node[fill=colorSeven] (c1) at (0,0) {};
	\node[fill=colorTwelve] (c2) at ($(c1) + (45:1)$) {};
	\node[fill=colorNine] (c3) at ($(c2) + (10:1)$) {};
	\node[fill=colorSeven] (c4) at ($(c3) + (-10:1)$) {};
	\node[fill=colorOne] (c5) at ($(c4) + (-45:1)$) {};
	\node[fill=colorFive] (c6) at ($(c5) + (-135:1.25)$) {};
	\node[fill=colorNine] (c7) at ($(c6) + (-180:1.5)$) {};

	\draw (c1) -- (c2);
	\draw (c2) -- (c3);
	\draw (c3) [dashed] -- (c4);
	\draw (c4) -- (c5);
	\draw (c5) [dashed] -- (c6);
	\draw (c6) [dashed] -- (c7);
	\draw (c7) [dashed] -- (c1);

	\node[fill=colorFour] (d1) at ($(c1) + (160:1)$) {};
    \node[fill=colorFour] (e1) at ($(c1) + (200:1)$) {};
    \draw (c1) [dotted] -- (d1);
    \draw (c1) -- (e1);

    \node[fill=colorOne] (d2) at ($(c2) + (115.5:1)$) {};
    \draw (c2) -- (d2);

    \node[fill=colorFive] (d4) at ($(c4) + (97.5:1)$) {};
    \node[fill=colorFour] (e4) at ($(c4) + (67.5:1)$) {};
    \node[fill=colorSeven] (f4) at ($(c4) + (37.5:1)$) {};
    \draw (c4) -- (d4);
    \draw (c4) -- (e4);
    \draw[dotted] (c4) -- (f4);

    \node[fill=colorNine] (d5) at ($(c5) + (0:1)$) {};
    \draw (c5) -- (d5);

    \node[fill=colorTwelve] (d6) at ($(c6) + (-47.5:1)$) {};
    \node[fill=colorSeven] (e6) at ($(c6) + (-87.5:1)$) {};
    \draw (c6) -- (d6);
    \draw (c6) -- (e6);

    \node[fill=colorOne] (d7) at ($(c7) + (-180+22.5:1)$) {};
    \node[fill=colorTwelve] (e7) at ($(c7) + (-180+52.5:1)$) {};
    \node[fill=colorFour] (f7) at ($(c7) + (-180+82.5:1)$) {};
    \node[fill=colorOne] (g7) at ($(c7) + (-180+112.5:1)$) {};
    \draw[dotted] (c7) -- (d7);
    \draw (c7) -- (e7);
    \draw (c7) -- (f7);
    \draw (c7) -- (g7);

\end{tikzpicture}
  &
		\begin{tikzpicture}[scale=1]
\def\centerarc[#1](#2)(#3:#4:#5){\draw[#1] ($(#2)+({#5*cos(#3)},{#5*sin(#3)})$) arc (#3:#4:#5);}

\centerarc[colorFour, very thick](0,0)(48:177:1.4);
    \centerarc[colorFour, very thick](0,0)(183:237:1.4);
    \centerarc[colorFour, very thick](0,0)(-123:42:1.4);

    \centerarc[colorFive, very thick](0,0)(48:297:1);
    \centerarc[colorFive, very thick](0,0)(-57:42:1);

    \centerarc[colorSeven, very thick](0,0)(48:177:1.2);
    \centerarc[colorSeven, very thick](0,0)(183:297:1.2);
    \centerarc[colorSeven, very thick](0,0)(-57:42:1.2);

    \centerarc[colorTwelve, very thick](0,0)(138:237:0.8);
    \centerarc[colorTwelve, very thick](0,0)(243:297:0.8);
    \centerarc[colorTwelve, very thick](0,0)(-57:132:0.8);

    \centerarc[colorNine, very thick](0,0)(93:237:1.6);
    \centerarc[colorNine, very thick](0,0)(243:357:1.6);
    \centerarc[colorNine, very thick](0,0)(3:87:1.6);

    \centerarc[colorOne, very thick](0,0)(3:132:1.8);
    \centerarc[colorOne, very thick](0,0)(138:237:1.8);
    \centerarc[colorOne, very thick](0,0)(-117:-3:1.8);

    \draw [dashed, very thick] ($(0,0)+(67.5:0.5)$) to ($(0,0)+(67.5:2)$);
    \draw [dashed, very thick] ($(0,0)+(210:0.5)$) to ($(0,0)+(210:2)$);
    \draw [dashed, very thick] ($(0,0)+(270:0.5)$) to ($(0,0)+(270:2)$);
    \draw [dashed, very thick] ($(0,0)+(330:0.5)$) to ($(0,0)+(330:2)$);
\end{tikzpicture}    \\
	\end{tabular}
	\caption{On the left, a pseudocaterpillar $G$: dotted edges are removed in the preprocessing; dashed edges are obtained from \cref{from pseudocaterpillar to circular-arc graph}; dotted and dashed edges form an optimal solution to {\CC}.
		On the right, an arc representation of the circular-arc graph constructed from the critical bad paths in $G$ (after preprocessing): dashed segments represent a minimum clique cover and correspond to the dashed edges in~$G$.}
	\label{1-caterpillar full resolution}
\end{figure}

\Cref{1-caterpillar full resolution} gives an example of a pseudocaterpillar $G$ and its representation as circular-arc graph $H$ where a minimum clique cover of $H$ represents an optimal solution to {\CC} on~$G$.

\medskip
\noindent\textbf{Generalisation of the result.}

Now, we slightly generalise our results on pseudocaterpillars to what we call \emph{necklace graphs with colourful beads}.
We say that a graph is a \emph{necklace graph} if it can be obtained by adding edges between connected components, called \emph{beads}, such that the newly added edges induce a path or a cycle, called the \emph{backbone}, and exactly one vertex of each bead is contained in the backbone.
Note that necklace graphs are, by definition, connected.
We say that a graph is a necklace graph with colourful beads if every bead is colourful.
The general idea is to transform the input necklace graph with colourful beads into a pseudocaterpillar with colourful stars, where beads will be represented as leaves, and then use our earlier results on pseudocaterpillars from to obtain a linear time algorithm for {\CC} and {\CP}.

\begin{figure}[htb]
    \centering
    \begin{tikzpicture}[scale=0.95]
\usetikzlibrary{shapes}
	\tikzset{every node/.style={draw,circle,fill=gray,inner sep=2.5pt,outer sep=0},
		every label/.append style={rectangle},
		label distance=10pt};

    \tikzset{bead/.style={draw,fill=white,regular polygon, regular polygon sides=3}}

	\node[fill=colorSeven] (c1) at (0,0) {};
	\node[fill=colorTwelve] (c2) at ($(c1) + (45:1)$) {};
	\node[fill=colorNine] (c3) at ($(c2) + (10:1)$) {};
	\node[fill=colorSeven] (c4) at ($(c3) + (-10:1)$) {};
	\node[fill=colorEight] (c5) at ($(c4) + (-45:1)$) {};
	\node[fill=colorFive] (c6) at ($(c5) + (-135:1.25)$) {};
	\node[fill=colorNine] (c7) at ($(c6) + (-180:1.5)$) {};

	\draw[] (c1) -- (c2);
	\draw[] (c2) -- (c3);
	\draw[] (c3) [] -- (c4);
	\draw[] (c4) -- (c5);
	\draw[] (c5) [] -- (c6);
	\draw[] (c6) [] -- (c7);
	\draw[] (c7) [] -- (c1);

\draw[dashed,rotate=0] (c1)++(-.65,0) ellipse (1.2cm and .5cm);
    \draw[dashed,rotate=-70] (c2)++(-.65,0) ellipse (1.2cm and .5cm);
    \draw[dashed,rotate=-90] (c3)++(-.65,0) ellipse (1.2cm and .5cm);
    \draw[dashed,rotate=-110] (c4)++(-.65,0) ellipse (1.2cm and .5cm);
	\draw[dashed,rotate=180] (c5)++(-.65,0) ellipse (1.2cm and .5cm);
	\draw[dashed,rotate=135] (c6)++(-.65,0) ellipse (1.2cm and .5cm);
	\draw[dashed,rotate=45] (c7)++(-.65,0) ellipse (1.2cm and .5cm);
\end{tikzpicture}
     \caption{A schematic representation of a necklace graph with colourful beads. The backbone is represented with plain edges and the beads with dashed ellipses.}
    \label{fig:necklace graph}
\end{figure}

First, we prove a similar result as \cref{exactly one colour appears at least twice} in the context of necklace graphs.

\begin{lemma}\label{at most one colour appears at least twice necklace}
{\CC} and {\CP} can be solved in linear time on coloured necklace graphs with colourful beads where at most one colour appears at least twice.
\end{lemma}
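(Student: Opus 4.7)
The plan is to mirror the proof of \cref{exactly one colour appears at least twice}, adjusted to the fact that beads are arbitrary colourful connected graphs rather than stars. If no colour appears twice, $G$ is already colourful, so the optimum for {\CC} is $0$ and for {\CP} is $1$ (since necklace graphs are connected). Otherwise let $\gamma$ be the unique colour appearing $n_\gamma \geq 2$ times; because beads are colourful, these $n_\gamma$ vertices lie in $n_\gamma$ distinct beads $B_1, \ldots, B_{n_\gamma}$ with backbone vertices $b_1, \ldots, b_{n_\gamma}$, and I write $v_i \in B_i$ for the colour-$\gamma$ vertex of $B_i$. The structural observation driving everything is that any path between two colour-$\gamma$ vertices in different beads must traverse the backbone, because each bead attaches to the rest of $G$ only through its single backbone vertex.

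For {\CP}, the $v_i$ must lie in different parts, giving a lower bound of $n_\gamma$. This bound is realised by splitting the backbone into $n_\gamma$ consecutive pieces (or arcs, in the cyclic case), each containing exactly one $b_i$, and attaching every bead $B_j$ in its entirety to the piece containing $b_j$; the resulting parts are connected and colourful, and the construction is a single linear-time traversal of $G$. For {\CC} with a path backbone, removing the $n_\gamma-1$ backbone edges between consecutive colour-$\gamma$ beads is optimal, since backbone edges are bridges of $G$ and $n_\gamma$ same-coloured vertices of a connected graph cannot be separated with fewer than $n_\gamma-1$ cuts.

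The main obstacle is the cycle-backbone case for {\CC}, because one might hope to save a backbone cut by detaching $v_i$ from $b_i$ inside its bead, and in the general necklace setting such a detachment may cost more than one edge. I would handle this case as in \cref{exactly one colour appears at least twice}: if at least two of the $v_i$ coincide with $b_i$, any solution contains at least two backbone edges, and removing any one of them reduces to the path case for an optimal total of $n_\gamma$ cuts, justified by the same exchange argument. Otherwise at most one $v_i$ lies on the backbone, and I would determine, by a linear-time sweep of each bead $B_i$ with $v_i \neq b_i$, whether $v_i$ can be detached from $b_i$ by removing a single edge, equivalently whether $B_i$ contains an edge-bridge separating them. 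If at least $n_\gamma-1$ beads admit such a single-edge detachment, a solution of cost $n_\gamma-1$ is feasible; otherwise any bead-internal detachment costs at least as much as a backbone cut, so the backbone-based solution of cost $n_\gamma$ is optimal. In every sub-case the construction is a single traversal of $G$, and the associated optimal partition for {\CP} is recovered as the connected components of $G$ after removal of the chosen edge set.
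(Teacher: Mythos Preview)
Your argument follows the same classification the paper uses --- distinguishing, for each colour-$\gamma$ bead, whether the $\gamma$-vertex is \emph{essentially on the backbone} (either $v_i=b_i$ or the $(v_i,b_i)$-cut in $B_i$ has size $\geq 2$) or \emph{essentially a leaf} (cut size $1$). The paper packages this as an explicit reduction to a coloured caterpillar with colourful stars and then invokes \cref{exactly one colour appears at least twice} directly; you instead unroll the case analysis by hand. That is a legitimate alternative, and your upper bounds, your treatment of {\CP}, and your handling of the path-backbone and the ``$\geq 2$ of the $v_i$ on the backbone'' sub-cases are all correct.

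The gap is in the last sub-case: cycle backbone, at most one $v_i$ on the backbone, and fewer than $n_\gamma-1$ beads admit a one-edge detachment. Your sentence ``any bead-internal detachment costs at least as much as a backbone cut, so the backbone-based solution of cost $n_\gamma$ is optimal'' is not a proof of the lower bound $|S|\geq n_\gamma$. A solution here need not contain two backbone edges (one can detach all $\gamma$-vertices internally and use no backbone edge at all), so the exchange argument you used in the previous sub-case does not carry over, and the inequality ``hard detachment $\geq 2 \geq 1 =$ backbone edge'' by itself says nothing about the total. What is actually needed is either (i) the normalisation the paper sketches --- for any optimal $S$ and any hard bead $B_i$, one may assume $S\cap E(B_i)=\emptyset$ by replacing those edges with the two backbone edges at $b_i$ --- after which at least two $\gamma$-vertices are effectively on the cycle and your previous sub-case applies; or (ii) a short counting argument: writing $d_e,d_h$ for the numbers of easy and hard beads detached by $S$ and $S_B$ for the backbone edges in $S$, one has $n_\gamma-d_e-d_h\leq\max(1,|S_B|)$ and $|S|\geq |S_B|+d_e+2d_h$, which together force $|S|\geq n_\gamma$ whenever there are at least two hard beads. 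Either fix closes the argument; the paper's reduction to \cref{exactly one colour appears at least twice} is simply the cleaner way to avoid re-proving this by hand.
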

\begin{proof}
	Let $G$ be a coloured necklace graph with colourful beads and backbone $B$.
	If no colour appears more than once, then $G$ is colourful and it can be checked in linear time.

	Suppose that there exists a unique colour $\gamma$ that appears at least twice in~$G$.
	Fix a vertex $u$ of colour $\gamma$ and let $X$ be the set of vertices in the same bead as~$u$.
	Let $v$ be the vertex in $V(B) \cap X$ (note that $u = v$ may hold) and consider a minimal solution $S$ to {\CC} on~$G$.
	Observe that if $u=v$, that is, if $u$ is on the backbone, then $S$ does not contain any edge in $G[X]$ (or else $S$ is not minimum).
	If the size of a minimum $u{,}v$-cut (a minimum set of edges intersecting every path between $u$ and $v$ in $G$) is at least $2$, then it is always possible to find $S$ such that $S$ does not contain any edge from $G[X]$, for instance by removing the two edges in $B$ incident to $v$ instead of the edges in the $u{,}v$-cut.
	So in this case, we can assume that $u \in V(B)$.
	On the other hand, if the $u{,}v$-cut is of size exactly one, then it is equivalent as if $v$ were the only neighbour of $u$ (the edge $\{u,v\}$ being a minimum cut of size one).
	It is now easy to see the parallel with colourful pseudocaterpillars where exactly one colour appears at least twice, and thus we can use a simple reduction and obtain a solution to {\CC} and {\CP} for $G$ in linear time using \cref{exactly one colour appears at least twice}.
\end{proof}

Now, we can focus on the case where the input graph contains at least two colours appearing at least twice.
The main idea consists in finding a reduction from {\CC} on coloured necklace graphs with colourful beads to {\CC} on coloured pseudocaterpillars with colourful stars.

The following lemma can be proved similarly as \cref{at least two colours appear twice}, using the fact that all beads are colourful and generalising the notation $C[u]$, for some vertex $u$ in the backbone, to be the set of colours of the vertices of the colourful bead containing $u$.

\begin{lemma}\label{at least two colours appear twice necklace}
	Let $G$ be a coloured necklace graph with colourful beads such that at least two colours appear at least twice in~$G$.
	Then there exists an optimal solution $S$ to {\CC} for $G$ such that $S \subseteq B$, where $B$ is the backbone of~$G$.
\end{lemma}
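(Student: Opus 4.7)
The proof will closely follow that of Lemma~\ref{at least two colours appear twice}, replacing the role of ``colourful stars'' with ``colourful beads'' and exploiting a single structural fact that transfers without modification: because every bead of $G$ is colourful, any two vertices sharing a colour lie in distinct beads, and therefore every bad path in $G$ uses at least one edge of the backbone $B$. This is the only structural property of caterpillars that the original argument really uses.

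Concretely, I would introduce the quantity $f(G) := |\{u \in V \mid \exists v \neq u,\ c(v) = c(u)\}|$ and observe that the hypothesis of the lemma gives $f(G) \geq 4$. The plan is then to induct on $f(G)$. For the base case $f(G) = 4$ (two colours, each appearing exactly twice) a direct case analysis -- each duplicated pair straddles two beads and so produces a bad path passing through $B$ -- yields an optimal $S \subseteq B$ of size at most $2$.

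For the inductive step, assume the statement whenever the ``at least two duplicated colours'' hypothesis holds with $f \leq t$, and consider $G$ with $f(G) = t+1$. Pick a vertex $u$ whose colour is duplicated, and form $G'$ from $G$ by recolouring $u$ with a fresh colour not used elsewhere, so that $f(G') \leq t$. Two subcases arise depending on whether $G'$ still has at least two duplicated colours: if it does, the induction hypothesis supplies an optimal solution $S_{G'} \subseteq B$; if it does not, at most one colour appears twice in $G'$ and Lemma~\ref{at most one colour appears at least twice necklace}, together with the fact that two remaining same-colour vertices occupy different beads and can therefore always be separated by backbone edges, supplies an optimal $S_{G'} \subseteq B$. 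If $S_{G'}$ already cuts every bad path of $G$, we are done. Otherwise, the only bad paths newly created in $G - S_{G'}$ are those passing through $u$; in the component of $u$ there is exactly one vertex $v$ of $u$'s original colour, giving a bad path $P$, and since $u,v$ lie in different beads we have $P \cap B \neq \emptyset$. Taking any $e \in P \cap B$, the set $S_G := S_{G'} \cup \{e\}$ is contained in $B$, and a matching lower bound (already implicit in the fact that $S_{G'}$ is optimal on $G'$) shows that it is optimal on $G$.

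The main obstacle I anticipate is the same bookkeeping subtlety that appears in the caterpillar version: ensuring that the inductive argument goes through when recolouring $u$ drops the number of duplicated colours below two, and verifying that the new bad paths in $G - S_{G'}$ are controlled enough that a single backbone edge suffices. Both of these are resolved cleanly by the colourfulness of the beads (which forces bad paths to cross $B$) and by an appeal to Lemma~\ref{at most one colour appears at least twice necklace} in the degenerate subcase, so no genuinely new idea beyond the caterpillar proof is needed.
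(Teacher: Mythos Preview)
Your proposal is correct and follows essentially the same inductive argument on $f(G)$ as the paper's proof, which indeed mirrors the proof of Lemma~\ref{at least two colours appear twice} almost verbatim. You are in fact slightly more careful than the paper in explicitly handling the subcase where recolouring $u$ drops the number of duplicated colours below two (invoking Lemma~\ref{at most one colour appears at least twice necklace} there); the paper simply glosses over this and applies the induction hypothesis directly.
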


\begin{theorem}\label{necklace to pseudocaterpillar in linear time}
	{\CC} and {\CP} can be solved in linear time on coloured necklace graphs with colourful beads.
\end{theorem}
\begin{proof}
	Let $G$ be a coloured necklace graph with colourful beads with backbone~$B$.
	We assume that $G$ contains at least two colours that appear at least twice, otherwise we use \cref{at most one colour appears at least twice necklace} to obtain a solution in linear time.
	We create a coloured graph $G'$ with vertex set $V(G') := V(G)$ (the vertices keep their original colour) and backbone $B' := B$.
	Then, for each $u \in V(B')$, we connect $u$ to all the vertices $v$ such that in $G$ the vertices $u$ and $v$ belong to the same bead.
	Since the beads are colourful, we obtain that $G'$ is a coloured pseudocaterpillar with colourful stars.
	By \cref{at least two colours appear twice}, there exists an optimal solution $S_{G'}$ to {\CC} on $G'$ such that $S_{G'} \subseteq B'$.
	It is clear that $S_{G'}$ is also a solution for~$G$.
	Moreover, following \cref{at least two colours appear twice necklace}, we know that there exists an optimal solution $S_G$ for $G$ such that $S_G \subseteq B$.
	Suppose that $|S_G| < |S_{G'}|$ and observe that $S_G$ is also a solution for $G'$.
	A contradiction with $S_{G'}$ being optimal.

	Also, note that an optimal solution to {\CP} can be obtained in linear time from $S_{G'}$ (since $S_{G'} \subseteq B'$).
	According to \cref{CC and CP linear time on pseudocaterpillars}, $S_{G'}$ can be obtained in linear time.
\end{proof}
 
\subsection{Complexity dichotomies}\label{dichotomies}

Our results from \cref{hard-caterpillars,easy-caterpillars} imply two complexity dichotomies on trees: one with respect to the maximum degree and the other with respect to the smallest value $k$ such that the input tree is a $k$\=/caterpillar (and thus not a $(k-1)$-caterpillar).

\dichotomytreeboundeddegree*
\begin{proof}
    If the input graph is a path, then \cref{CC and CP linear time on pseudocaterpillars} implies that both {\CC} and {\CP} are linear-time solvable.
    On the other hand, if the input graph is a tree that can contain vertices with degree $3$ or more, then by \cref{np-hard caterpillars} both problems are \NP-complete.
\end{proof}

\dichotomykcaterpillar*
\begin{proof}
    If $k \leq 1$, then \cref{CC and CP linear time on pseudocaterpillars} implies that both problems can be solved in linear-time.
    However, if $k \geq 2$, then \cref{np-hard caterpillars} shows that the problem is \NP-complete.
\end{proof}

Note that these results naturally extend to the case when the input graph can be disconnected by considering each connected component separately.
In particular, both problems are linear-time solvable on forest if every connected component is a $1$-caterpillar (which includes the case when the component is a path) and \NP-complete otherwise. 
\section{Colourful Components on planar graphs}\label{subcubic graphs}

In \cite{MR2988970}, the authors prove that {\CC} is \NP-complete even when restricted to $3$\=/coloured graphs with maximum degree $6$.
Using a similar reduction from \textsc{Planar $3$\=/SAT}, we show how the vertices of degree $6$ can be replaced with gadgets only containing vertices of degree $4$, or $3$, if we relax the number of colours from $3$ to $5$, or to $12$, respectively.

\problemdef{\textsc{Planar $3$\=/SAT}}{A $3$\=/\textsc{CNF} formula $\phi$ in which the bipartite graph of variables and clauses is planar.}{Is there a satisfying assignment for $\phi$?}

Note that \textsc{Planar $3$\=/SAT} is \NP-complete \cite{MR652906} and it can be shown that it remains so even if each clause contains exactly $3$ literals \cite{MR3962621}.
In the following, we consider the latter version.

\begin{construction}\label{from phi to graph}
	Given an instance $\phi$ of \textsc{Planar $3$\=/SAT}, that is a set of $m$ clauses $C_1,C_2,\dots,C_m$ on $n$ variables, we construct the graph $G$ such that:
	\begin{itemize}
		\item For each variable $x$ in $\phi$, let $m_x$ denote the number of clauses in which $x$ appears.
    We construct a \emph{variable cycle} of length $4m_x$ in $G$ with vertices $V_{x} := \{x_j^1, x_j^2, x_j^3, x_j^4 ~|~ x \in C_j\}$ with an arbitrary fixed cyclic ordering of the clauses containing $x$.
    The vertices are coloured alternatively with two colours $c_o$ and $c_e$ such that $c(x_j^1) = c(x_j^3)= c_o$ and $c(x_j^2) = c(x_j^4)= c_e$, for all $j$ such that $x\in C_j$.
    For simplicity, edges $\{x_j^1,x_j^2\}$ and $\{x_j^3,x_j^4\}$ are called \emph{odd edges}, and the other edges in the variable cycle are called \emph{even edges}.

		\item For each clause $C_j$ containing three variables $p$, $q$ and $r$, we construct a \emph{clause gadget}.
      We propose two types of gadgets (see \cref{gadgets for max degree 4 and 3} for an example):

	      (i) The gadget $\mathcal{A}_j^4$ is made of a cycle of length $3$, with vertices $a_j^1$, $a_j^2$ and $a_j^3$ such that each $a_j^i$ is given colour $i$, different from $c_o$ and $c_e$.
        We define how the vertices from $V_p$ are connected to $\mathcal{A}_j^4$.
        If the variable $p$ appears as a positive literal in $C_j$, connect the vertices $p_j^1$ to $a_j^1$ and $p_j^2$ to $a_j^2$.
        Otherwise, if $p$ occurs as a negative literal, connect the vertices $p_j^2$ to $a_j^1$ and $p_j^3$ to $a_j^2$.
        Do the same for the variables $q$ and $r$ by connecting the corresponding vertices in $V_{q}$ to $a_j^2$ and $a_j^3$, and the corresponding vertices in $V_{r}$ to $a_j^3$ and $a_j^1$.
        Notice that the vertices in $\mathcal{A}_j^4$ have degree  $4$.

          (ii) The gadget $\mathcal{A}_j^3$ is made of a cycle of length $9$ with vertices labelled $a_j^1,\dots,a_j^9$ and an additional vertex $a_j^{10}$ connected to $a_j^2$, $a_j^5$ and $a_j^8$.
        We set the colour $i$ to each vertex $a_j^i$, different from $c_o$ and $c_e$.
        We define how the vertices from $V_p$ are connected to $\mathcal{A}_j^3$.
        If the variable $p$ appears as a positive literal in $C_j$, connect the vertices $p_j^1$ to $a_j^1$ and $p_j^2$ to $a_j^3$.
        Otherwise, if $p$ occurs as a negative literal, connect the vertices $p_j^2$ to $a_j^1$ and $p_j^3$ to $a_j^3$.
        Do the same for the variables $q$ and $r$ by connecting the corresponding vertices in $V_q$ to $a_j^4$ and $a_j^6$, and the corresponding vertices in $V_r$ to $a_j^7$ and $a_j^9$.
        Notice that the vertices in $\mathcal{A}_j^3$ have degree  $3$.
	\end{itemize}
	Since the bipartite graph of variables and clauses of $\phi$ is planar and each vertex can be replaced by a clause or vertex gadget, with a correct cyclic ordering of the clauses for each variable, the resulting graph $G$ is planar.
\end{construction}

Note that \cref{from phi to graph} can be done in polynomial time.

\begin{figure}[tb]
	\centering
	\begin{tikzpicture}[scale=0.7]
	\tikzset{every node/.style={draw,circle,fill=gray,minimum size=5pt,inner sep=2.5pt,outer sep=0},
		every label/.append style={rectangle},
		label distance=-1pt};

	\begin{scope} \coordinate (p) at (150:5);
		\foreach \a/\t in {45/p-1, 30/p0, 10/p1, -10/p2, -30/p3, -45/p4} {
				\coordinate (\t) at ([shift=(-\a-30:2.5)]p) {};
			}
		\node (p0) at (p0) {};
		\node[label=175:$\mathstrut p_j^1$,fill=white] (p1) at (p1) {};
		\node[label=175:$\mathstrut p_j^2$] (p2) at (p2) {};
		\node[fill=white] (p3) at (p3) {};

		\draw[] (p0) -- (p1) -- (p2) -- (p3);
		\draw[dashed] (p-1) -- (p0);
		\draw[dashed] (p3) -- (p4);

		\coordinate (q) at (30:5);
		\foreach \a/\t in {45/q-1, 30/q0, 10/q1, -10/q2, -30/q3, -45/q4} {
				\coordinate (\t) at ([shift=(-\a-150:2.5)]q) {};
			}
		\node[fill=white] (q0) at (q0) {};
		\node[label=5:$\mathstrut q_j^2$] (q1) at (q1) {};
		\node[label=5:$\mathstrut q_j^3$,fill=white] (q2) at (q2) {};
		\node (q3) at (q3) {};

		\draw[] (q0) -- (q1) -- (q2) -- (q3);
		\draw[dashed] (q-1) -- (q0);
		\draw[dashed] (q3) -- (q4);

		\coordinate (r) at (-90:5);
		\foreach \a/\t in {45/r-1, 30/r0, 10/r1, -10/r2, -30/r3, -45/r4} {
				\coordinate (\t) at ([shift=(-\a+90:2.5)]r) {};
			}
		\node (r0) at (r0) {};
		\node[label=below:$\mathstrut r_j^1$,fill=white] (r1) at (r1) {};
		\node[label=below:$\mathstrut r_j^2$] (r2) at (r2) {};
		\node[fill=white] (r3) at (r3) {};

		\draw[] (r0) -- (r1) -- (r2) -- (r3);
		\draw[dashed] (r-1) -- (r0);
		\draw[dashed] (r3) -- (r4);

{
\foreach \a/\t in {1/aj1, 2/aj2, 3/aj3} {
				\coordinate (\t) at (-30+-\a*360/3:6/5);
			}

		\node[fill=colorEight,,label=-120:$\mathstrut a_j^1$] (aj1) at (aj1) {};
		\node[fill=colorNine,label=$\mathstrut a_j^2$] (aj2) at (aj2) {};
		\node[fill=colorEleven,label=-60:$\mathstrut a_j^3$] (aj3) at (aj3) {};

		\draw[] (aj1) -- (aj2) -- (aj3) --(aj1);
		}
\draw[] (p1) -- (aj1);
		\draw[] (p2) -- (aj2);
		\draw[] (q1) -- (aj2);
		\draw[] (q2) -- (aj3);
		\draw[] (r1) -- (aj3);
		\draw[] (r2) -- (aj1);

\node[draw=none,fill=none,rectangle] (caption1) at (0,-4.7) {Gadget with vertices of degree $4$.};

	\end{scope}

	\begin{scope}[xshift=9.5cm] \coordinate (p) at (150:5);
		\foreach \a/\t in {45/p-1, 30/p0, 10/p1, -10/p2, -30/p3, -45/p4} {
				\coordinate (\t) at ([shift=(-\a-30:2.5)]p) {};
			}
		\node (p0) at (p0) {};
		\node[label=175:$\mathstrut p_j^1$,fill=white] (p1) at (p1) {};
		\node[label=175:$\mathstrut p_j^2$] (p2) at (p2) {};
		\node[fill=white] (p3) at (p3) {};

		\draw[] (p0) -- (p1) -- (p2) -- (p3);
		\draw[dashed] (p-1) -- (p0);
		\draw[dashed] (p3) -- (p4);

		\coordinate (q) at (30:5);
		\foreach \a/\t in {45/q-1, 30/q0, 10/q1, -10/q2, -30/q3, -45/q4} {
				\coordinate (\t) at ([shift=(-\a-150:2.5)]q) {};
			}
		\node[fill=white] (q0) at (q0) {};
		\node[label=5:$\mathstrut q_j^2$] (q1) at (q1) {};
		\node[label=5:$\mathstrut q_j^3$,fill=white] (q2) at (q2) {};
		\node (q3) at (q3) {};

		\draw[] (q0) -- (q1) -- (q2) -- (q3);
		\draw[dashed] (q-1) -- (q0);
		\draw[dashed] (q3) -- (q4);

		\coordinate (r) at (-90:5);
		\foreach \a/\t in {45/r-1, 30/r0, 10/r1, -10/r2, -30/r3, -45/r4} {
				\coordinate (\t) at ([shift=(-\a+90:2.5)]r) {};
			}
		\node (r0) at (r0) {};
		\node[label=below:$\mathstrut r_j^1$,fill=white] (r1) at (r1) {};
		\node[label=below:$\mathstrut r_j^2$] (r2) at (r2) {};
		\node[fill=white] (r3) at (r3) {};

		\draw[] (r0) -- (r1) -- (r2) -- (r3);
		\draw[dashed] (r-1) -- (r0);
		\draw[dashed] (r3) -- (r4);

		{
		\tikzset{every node/.append style={fill=black}}

		\foreach \a/\t in {1/aj1, 2/aj3, 3/aj4, 4/aj6, 5/aj7, 6/aj9} {
				\coordinate (\t) at (-120-\a*360/6:8/5);
			}

		\node[fill=colorFour,label=-120:$\mathstrut a_j^1$] (aj1) at (aj1) {};
		\node[fill=colorTwelve,label=60:$\mathstrut a_j^3$] (aj3) at (aj3) {};
		\node[fill=colorEight,label={[label distance=-5pt]150:$\mathstrut a_j^2$}] (aj2) at ($(aj1)!0.5!(aj3)$) {};

		\node[fill=colorSix,label=120:$\mathstrut a_j^4$] (aj4) at (aj4) {};
		\node[fill=colorSeven,label=-60:$\mathstrut a_j^6$] (aj6) at (aj6) {};
		\node[fill=colorNine,label={[label distance=-5pt]30:$\mathstrut a_j^5$}] (aj5) at ($(aj4)!0.5!(aj6)$) {};

		\node[fill=colorTen,label=0:$\mathstrut a_j^7$] (aj7) at (aj7) {};
		\node[fill=colorFive,label=180:$\mathstrut a_j^9$] (aj9) at (aj9) {};
		\node[fill=colorEleven,label=-90:$\mathstrut a_j^8$] (aj8) at ($(aj7)!0.5!(aj9)$) {};

		\node[fill=colorOne,label=90:$\mathstrut a_j^{10}$] (aj10) at (0,0) {};

		\draw[] (aj1) -- (aj2) -- (aj3) -- (aj4) -- (aj5) -- (aj6) -- (aj7) -- (aj8) -- (aj9) -- (aj1);
		\draw[] (aj10) --(aj2) (aj10) -- (aj5) (aj10) -- (aj8);
		}
\draw[] (p1) -- (aj1);
		\draw[] (p2) -- (aj3);
		\draw[] (q1) -- (aj4);
		\draw[] (q2) -- (aj6);
		\draw[] (r1) -- (aj7);
		\draw[] (r2) -- (aj9);

\node[draw=none,fill=none,rectangle] (caption1) at (0,-4.7) {Gadget with vertices of degree $3$.};

	\end{scope}

\end{tikzpicture}
 	\caption{Two clause gadgets $\mathcal{A}_j^4$ (left) and $\mathcal{A}_j^3$ (right) of a clause $C_j := (p \vee \bar{q} \vee r)$. White vertices have colour $c_o$, grey vertices have colour $c_e$.}\label{gadgets for max degree 4 and 3}
\end{figure}

\begin{theorem}
	{\CC} is \NP-complete on $5$\=/coloured planar graphs with maximum degree $4$ and on $12$-coloured planar graphs with maximum degree $3$.
\end{theorem}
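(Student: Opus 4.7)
The containment \CC{} $\in \NP$ is immediate: a candidate set of at most $p$ edges can be verified in polynomial time to witness colourfulness. The plan for \NP-hardness is to establish that \cref{from phi to graph} is a valid polynomial-time reduction from \textsc{Planar $3$-SAT} with exactly three literals per clause: with clause gadgets of type $\mathcal{A}_j^4$ the construction produces a $5$-coloured planar graph of maximum degree $4$, and with clause gadgets of type $\mathcal{A}_j^3$ a $12$-coloured planar graph of maximum degree $3$. In both variants I would show that $\phi$ is satisfiable if and only if $G$ admits a solution to \CC{} of size at most $K = 6m + c\cdot m$, where the constant $c$ depends only on the gadget type and is determined by direct inspection.

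The budget $K$ decomposes into two parts. First, every variable cycle $V_x$ has length $4m_x$ and carries only the two alternating colours $c_o$ and $c_e$, so any solution must remove at least $2m_x$ of its edges, and equality is attained exclusively by one of the two alternating cut-patterns, which I will use to encode the Boolean value of $x$. Summing gives $\sum_x 2m_x = 6m$, since every clause contributes three literal-occurrences. Second, each clause gadget contributes at least $c$ additional cuts, needed to prevent two vertices of colour $c_o$ (or two of colour $c_e$) from ending up in the same colourful component once the attached variable-cycle paths are pulled in through the six external edges of the gadget.

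For the forward direction, given a satisfying $\beta$, I would fix the alternating cut-pattern of each $V_x$ according to $\beta(x)$, and for every clause $C_j$ pick a literal $\ell \in C_j$ made true by $\beta$, then prescribe exactly $c$ further edges to cut inside (or on the boundary of) the gadget of $C_j$ so that the gadget component is colourful and the pulls of the two non-witness literals do not create colour conflicts. Verification is a local case check over the eight sign patterns of $(p,q,r)$ combined with the witness choice. Conversely, starting from any solution $S$ with $|S|\le K$, tightness of the lower bounds forces $S$ to contain exactly $2m_x$ edges in each cycle $V_x$ in an alternating pattern (read off as a Boolean value $\beta(x)$) and exactly $c$ edges per clause gadget. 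I would then argue by contradiction: if some clause $C_j$ were unsatisfied by $\beta$, all three variable cycles incident to the gadget would be cut in their ``unfriendly'' orientation, producing bad paths between same-coloured vertices through the gadget that no set of only $c$ gadget-edges can eliminate, contradicting $|S|\le K$.

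The main obstacle is the per-gadget analysis, in particular for the degree-$3$ gadget $\mathcal{A}_j^3$, whose $9$-cycle plus central hub $a_j^{10}$ admits many non-isomorphic decompositions into colourful components. The heart of the proof is a careful case enumeration: for each of the $2^3$ assignments to $(p,q,r)$ one must pin down a tight count of how many gadget-edges must be cut, and the gadgets are designed precisely so that this count is strictly smaller when at least one literal is satisfied than when none are; making this separation quantitative and consistent with the chosen constant $c$ is where most of the technical work lies.
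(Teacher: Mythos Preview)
Your plan is the paper's approach and would succeed, but you are overcomplicating the per-gadget analysis in two related ways.

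First, the constant $c$ does not depend on the gadget type: it is $c=4$ for both $\mathcal{A}_j^4$ and $\mathcal{A}_j^3$, giving $K=10m$ in either case. The reason is that every vertex inside a clause gadget carries its own fresh colour (three colours in $\mathcal{A}_j^4$, ten in $\mathcal{A}_j^3$), so the gadget subgraph is colourful by itself and never needs an internal cut. All conflicts arise from the colours $c_o,c_e$ entering through the six external edges, and severing the four edges to two of the three variable cycles already suffices, provided the remaining attached variable has the orientation that keeps its two attachment vertices adjacent, i.e.\ satisfies the clause.

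Second, this observation dissolves the ``main obstacle'' you anticipate for $\mathcal{A}_j^3$: no enumeration of decompositions of the $9$-cycle plus hub is needed. The backward direction in the paper is a pure counting argument rather than a $2^3$ case split: tightness forces exactly $2m_x$ alternating edges per cycle and exactly four edges per clause, and those four must be the boundary edges to two of the three variables (three or fewer would leave two variable cycles joined through the gadget, forcing extra cycle cuts and exceeding the budget). The single variable still attached then satisfies the clause, since otherwise its two attachment vertices lie in different pieces of its cycle and drag two $c_o$-vertices (or two $c_e$-vertices) into the gadget's component.
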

\begin{proof}
	Let $\phi$ be an instance of \textsc{Planar $3$\=/SAT} with $m$ clauses and $n$ variables and $G$ be the graph obtained through \cref{from phi to graph}, using either $\mathcal{A}_j^4$ gadgets or $\mathcal{A}_j^3$ gadgets.
	Notice that, depending on the type of gadget used, $G$ is either a $5$\=/coloured graph with maximum degree $4$ or a $12$-coloured graph with maximum degree $3$.
	We claim that there exists a satisfying assignment $\beta$ for $\phi$ if and only if there exists a solution $S$ to {\CC} such that $|S| = 10m$.

	Let $\beta$ be a satisfying assignment for $\phi$.
	If $x = False$ in $\beta$, then remove the $2m_x$ odd edges in its variable cycle (of length $4m_x$), that is, edges of the type $\{x_j^1,x_j^2\}$ and $\{x_j^3,x_j^4\}$, for every clause $C_j$ containing $x$.
    Similarly, if $x = True$ in $\beta$, then remove the $2m_x$ even edges in its variable cycle.
	Since every clause contains exactly $3$ literals, $\sum_{1 \leq i \leq n} 4m_i / 2 = 4 \times 3m/2 = 6m$ edges have been removed and $G$ does not have any bad paths that contain only vertices from the variable cycles.
	Now, fix a clause $C_j$ and denote its variables by $p$, $q$ and~$r$.
	Without loss of generality assume that $p$ satisfies $C_j$.
	Remove the $4$ edges between the clause gadget of $C_j$ and the vertices in $V_q$ and $V_r$.
	Without loss of generality, assume that $C_j$ contains $p$ as a positive literal.
	Hence the gadget is connected to the vertices $p_j^1$ and $p_j^2$.
	Since $p$ is set to true in $\beta$, the even edges of the variable cycle have been removed.
	Therefore, the vertices $p_j^1$, $p_j^2$, together with the vertices in the clause gadget form a colourful component.
	The other case where $C_j$ contains $p$ as a negative literal is similar, and the vertices $p_j^2$, $p_j^3$ and those in the clause gadget form a colourful component.
	If the clause gadget is $\mathcal{A}_j^4$, then the colourful component is of size $5$.
	If the clause gadget is $\mathcal{A}_j^3$, then the colourful component is of size $12$.
	Since $4$ edges are removed for each clause, a total of $4m$ edges between variable cycles and clause gadgets are removed.
	Let $S$ be the set of removed edges, and notice that $S$ is a solution to {\CC} such that $|S| = 6m + 4m = 10m$.

    Let $S$ be a solution to {\CC} for $G$ such that $|S| = 10m$.
	Since the vertices of the variable cycles are coloured alternatively with $c_o$ and $c_e$, for each variable cycle, $S$ contains all its odd edges or all its even edges.
    Let $S' \subseteq S$ be a set of pairwise non-adjacent edges in $S$ with maximum size whose edges have both endpoints in a variable cycle.
    In particular, notice that if a variable cycle has all its odd edges in $S$ but not all its even edges, then all its odd edges belong to $S'$, and similarly if all its even edges belong to $S$;
    if all edges of the variable cycle belong to $S$, that is, both odd and even edges, then $S'$ contains either all its odd edges or all its even edges.
    Hence, $S'$ contains at least $\sum_{1 \leq i \leq n} 4m_i / 2 = 6m$ edges.
    For a fixed clause $C_j$, we denote by $H_j$ the connected component containing the clause gadget of $C_j$ in $G-S'$.
    Note that no two clause gadgets belong to a same connected component of $G-S'$, and so for any two clauses $C_k$ and $C_\ell$ we have $E(H_k) \cap E(H_\ell) = \emptyset$.
    Let $F_j = S \cap E(H_j)$.
    Observe that, independently on whether the odd or even edges of the variable cycles connected to the clause gadget of $C_j$ belong to $S'$, if $|F_j| \leq 3$, then $H_j - F_j$ contains two vertices with the same colour $c_o$ or $c_e$ (from two variable cycles), and thus $H_j - F_j$ is not colourful; a contradiction with the fact that $G-S$ is colourful.
    Hence, $|F_j| \geq 4$.
    Since $|S| = 10m$ and $|S'| \geq 6m$, we get that $|S'| = 6m$ and that $|F_k| = 4$ for each clause $C_k$.
    Note that this also implies that, for each variable cycle, $S'$ contains exactly either all of its odd edges or all of its even edges.
    Also, notice that the $4$ edges in $S \cap F_k$ must be between the clause gadget and exactly two of the variable cycles attached to it; otherwise, at least two vertices with the same colour $c_o$ or $c_e$ (from two variable cycles) would belong to a same connected component of $G-S$.
    This implies that each clause gadget is adjacent to vertices of exactly one variable cycle in $G-S$.
	Let $\beta$ be an assignment for $\phi$ described as follows.
	For each variable $x$, if $S'$ contains the odd edges of the variable cycle of $x$, then set $x := False$ in $\beta$.
	Otherwise, if $S'$ contains the even edges, then set $x := True$ in $\beta$.
    Recall that each clause gadget is adjacent to vertices of exactly one variable cycle in $G-S$.
	To prove that $\beta$ is a satisfying assignment for $\phi$, we claim that if vertices of the variable cycle of a variable $x$ is connected to the clause gadget of a clause $C_j$ in $G-S$, then $C_j$ is satisfied by $x$.
    Suppose the contrary and assume, without loss of generality, that $x$ appears as a positive literal in $\phi$ but $x = False$ in $\beta$.
	This means that the odd edge $\{x_j^1,x_j^2\}$ belongs to $S'$, and hence to $S$, and that the even edge $\{x_j^2,x_j^3\}$ belongs to $G-S$.
	Therefore, $x_j^1$ and $x_j^3$ belong to a same colourful component of $G-S$.
	However, $c(x_j^1) = c(x_j^3)$, a contradiction.
	We conclude that $\beta$ is a satisfying assignment for $\phi$.
\end{proof}
 
\section{Conclusion}

We provide two complexity dichotomies on trees: one with respect to the maximum degree (see \cref{dichotomy tree bounded degree}) and the other with respect to the smallest value $k$ such that the input tree is a $k$\=/caterpillar (see \cref{dichotomy k-caterpillar}).
In order to complete the dichotomy on $k$\=/caterpillars with respect to the maximum degree, for every $k \in \mathbb{N}$, we ask the following question.
\begin{question}
    What is the complexity of {\CC} and {\CP} on $3$\=/caterpillars with maximum degree~$3$, $2$\=/caterpillars with maximum degree~$3$ and $2$\=/caterpillars with maximum degree~$4$?
\end{question}

We also prove that {\CC} remains \NP\=/complete on $5$\=/coloured planar graphs with maximum degree~$4$ and on $12$-coloured planar graphs with maximum degree~$3$.
A natural question is to ask whether this result is tight with respect to the number of colours and the maximum degree.
\begin{question}
    What is the complexity of {\CC} on $4$\=/coloured (planar) graphs with maximum degree~$4$ and on $11$\=/coloured (planar) graphs with maximum degree~$3$?
\end{question}

\paragraph*{Acknowledgements}
The authors wish to thank Marthe Bonamy for suggesting these problems and Paul Ouvrard for valuable discussions.

\printbibliography

@article {MR3392498,
    AUTHOR = {Adamaszek, Anna and Popa, Alexandru},
     TITLE = {Algorithmic and hardness results for the colorful components
              problems},
   JOURNAL = {Algorithmica},
  FJOURNAL = {Algorithmica. An International Journal in Computer Science},
    VOLUME = {73},
      YEAR = {2015},
    NUMBER = {2},
     PAGES = {371--388},
      ISSN = {0178-4617},
   MRCLASS = {05C85 (05C15 05C40 68Q25)},
  MRNUMBER = {3392498},
MRREVIEWER = {Louxin Zhang},
       DOI = {10.1007/s00453-014-9926-0},
       IGNOREURL= {https://doi.org/10.1007/s00453-014-9926-0},
}

@article {MR2323384,
    AUTHOR = {Avidor, Adi and Langberg, Michael},
     TITLE = {The multi-multiway cut problem},
   JOURNAL = {Theoret. Comput. Sci.},
  FJOURNAL = {Theoretical Computer Science},
    VOLUME = {377},
      YEAR = {2007},
    NUMBER = {1-3},
     PAGES = {35--42},
      ISSN = {0304-3975},
   MRCLASS = {90C35 (68W25 90C08)},
  MRNUMBER = {2323384},
MRREVIEWER = {Margarida P. Mello},
       DOI = {10.1016/j.tcs.2007.02.026},
       IGNOREURL= {https://doi.org/10.1016/j.tcs.2007.02.026},
}

@article {MR3759884,
    AUTHOR = {Bousquet, Nicolas and Daligault, Jean and Thomass\'{e}, St\'{e}phan},
     TITLE = {Multicut is {FPT}},
   JOURNAL = {SIAM J. Comput.},
  FJOURNAL = {SIAM Journal on Computing},
    VOLUME = {47},
      YEAR = {2018},
    NUMBER = {1},
     PAGES = {166--207},
      ISSN = {0097-5397},
   MRCLASS = {05C85 (05C40 68Q25 68R10)},
  MRNUMBER = {3759884},
MRREVIEWER = {Anita Pal},
       DOI = {10.1137/140961808},
       IGNOREURL= {https://doi.org/10.1137/140961808},
}

@incollection {MR2988970,
    AUTHOR = {Bruckner, Sharon and H\"{u}ffner, Falk and Komusiewicz, Christian
              and Niedermeier, Rolf and Thiel, Sven and Uhlmann, Johannes},
     TITLE = {Partitioning into colorful components by minimum edge
              deletions},
 BOOKTITLE = {Combinatorial pattern matching},
    SERIES = {Lecture Notes in Comput. Sci.},
    VOLUME = {7354},
     PAGES = {56--69},
 PUBLISHER = {Springer, Heidelberg},
      YEAR = {2012},
   MRCLASS = {68Q17 (05C70 68Q25 68R10)},
  MRNUMBER = {2988970},
       DOI = {10.1007/978-3-642-31265-6_5},
       IGNOREURL= {https://doi.org/10.1007/978-3-642-31265-6_5},
}

@incollection {MR3968574,
    AUTHOR = {Bulteau, Laurent and Dabrowski, Konrad K. and Fertin,
              Guillaume and Johnson, Matthew and Paulusma, Dani\"{e}l and
              Vialette, St\'{e}phane},
     TITLE = {Finding a small number of colourful components},
 BOOKTITLE = {30th {A}nnual {S}ymposium on {C}ombinatorial {P}attern
              {M}atching},
    SERIES = {LIPIcs. Leibniz Int. Proc. Inform.},
    VOLUME = {128},
     PAGES = {Art. No. 20, 14},
 PUBLISHER = {Schloss Dagstuhl. Leibniz-Zent. Inform., Wadern},
      YEAR = {2019},
   MRCLASS = {68R10 (68Q17 68Q25)},
  MRNUMBER = {3968574},
}

@incollection {MR3991231,
    AUTHOR = {Chleb\'{i}kov\'{a}, Janka and Dallard, Cl\'{e}ment},
     TITLE = {Towards a complexity dichotomy for colourful components
              problems on {$k$}-caterpillars and small-degree planar graphs},
 BOOKTITLE = {Combinatorial algorithms},
    SERIES = {Lecture Notes in Comput. Sci.},
    VOLUME = {11638},
     PAGES = {136--147},
 PUBLISHER = {Springer, Cham},
      YEAR = {2019},
   MRCLASS = {68R10 (68Q17 68Q25)},
  MRNUMBER = {3991231},
       DOI = {10.1007/978-3-030-25005-8_12},
       IGNOREURL= {https://doi.org/10.1007/978-3-030-25005-8_12},
}

@book {MR1637890,
    AUTHOR = {Papadimitriou, Christos H. and Steiglitz, Kenneth},
     TITLE = {Combinatorial optimization: algorithms and complexity},
      NOTE = {Corrected reprint of the 1982 original},
 PUBLISHER = {Dover Publications, Inc., Mineola, NY},
      YEAR = {1998},
     PAGES = {xvi+496},
      ISBN = {0-486-40258-4},
   MRCLASS = {90-01 (68Q25 90C05 90C27 90C35)},
  MRNUMBER = {1637890},
}

@article {MR3810330,
    AUTHOR = {Dondi, Riccardo and Sikora, Florian},
     TITLE = {Parameterized complexity and approximation issues for the
              colorful components problems},
   JOURNAL = {Theoret. Comput. Sci.},
  FJOURNAL = {Theoretical Computer Science},
    VOLUME = {739},
      YEAR = {2018},
     PAGES = {1--12},
      ISSN = {0304-3975},
   MRCLASS = {05C85 (05C15 68Q25 68R10)},
  MRNUMBER = {3810330},
MRREVIEWER = {Aleksander Vesel},
       DOI = {10.1016/j.tcs.2018.04.044},
       IGNOREURL= {https://doi.org/10.1016/j.tcs.2018.04.044},
}

@article {MR2799278,
    AUTHOR = {Fellows, Michael R. and Fertin, Guillaume and Hermelin, Danny
              and Vialette, St\'{e}phane},
     TITLE = {Upper and lower bounds for finding connected motifs in
              vertex-colored graphs},
   JOURNAL = {J. Comput. System Sci.},
  FJOURNAL = {Journal of Computer and System Sciences},
    VOLUME = {77},
      YEAR = {2011},
    NUMBER = {4},
     PAGES = {799--811},
      ISSN = {0022-0000},
   MRCLASS = {68Q25 (68R10 92D20)},
  MRNUMBER = {2799278},
MRREVIEWER = {Erkki M\"{a}kinen},
       DOI = {10.1016/j.jcss.2010.07.003},
       IGNOREURL= {https://doi.org/10.1016/j.jcss.2010.07.003},
}

@article {MR1986480,
    AUTHOR = {He, G. and Liu, J. and Zhao, C.},
     TITLE = {Approximation algorithms for some graph partitioning problems},
   JOURNAL = {J. Graph Algorithms Appl.},
  FJOURNAL = {Journal of Graph Algorithms and Applications},
    VOLUME = {4},
      YEAR = {2000},
     PAGES = {no. 2, 11},
   MRCLASS = {68R10 (05C85 68W25)},
  MRNUMBER = {1986480},
       DOI = {10.7155/jgaa.00021},
       IGNOREURL= {https://doi.org/10.7155/jgaa.00021},
}

@article {MR1143909,
    AUTHOR = {Hsu, Wen Lian and Tsai, Kuo-Hui},
     TITLE = {Linear time algorithms on circular-arc graphs},
   JOURNAL = {Inform. Process. Lett.},
  FJOURNAL = {Information Processing Letters},
    VOLUME = {40},
      YEAR = {1991},
    NUMBER = {3},
     PAGES = {123--129},
      ISSN = {0020-0190},
   MRCLASS = {68R10 (68Q25)},
  MRNUMBER = {1143909},
       DOI = {10.1016/0020-0190(91)90165-E},
       IGNOREURL= {https://doi.org/10.1016/0020-0190(91)90165-E},
}

@article {MR652906,
    AUTHOR = {Lichtenstein, David},
     TITLE = {Planar formulae and their uses},
   JOURNAL = {SIAM J. Comput.},
  FJOURNAL = {SIAM Journal on Computing},
    VOLUME = {11},
      YEAR = {1982},
    NUMBER = {2},
     PAGES = {329--343},
      ISSN = {0097-5397},
   MRCLASS = {68C25 (05C10)},
  MRNUMBER = {652906},
MRREVIEWER = {Walter Stromquist},
       DOI = {10.1137/0211025},
       IGNOREURL= {https://doi.org/10.1137/0211025},
}

@article {MR3172250,
    AUTHOR = {Marx, D\'{a}niel and Razgon, Igor},
     TITLE = {Fixed-parameter tractability of multicut parameterized by the
              size of the cutset},
   JOURNAL = {SIAM J. Comput.},
  FJOURNAL = {SIAM Journal on Computing},
    VOLUME = {43},
      YEAR = {2014},
    NUMBER = {2},
     PAGES = {355--388},
      ISSN = {0097-5397},
   MRCLASS = {68Q25 (05C85 68Q15)},
  MRNUMBER = {3172250},
MRREVIEWER = {Lenwood S. Heath},
       DOI = {10.1137/110855247},
       IGNOREURL= {https://doi.org/10.1137/110855247},
}

@incollection {MR3836066,
    AUTHOR = {Misra, Neeldhara},
     TITLE = {On the parameterized complexity of colorful components and
              related problems},
 BOOKTITLE = {Combinatorial algorithms},
    SERIES = {Lecture Notes in Comput. Sci.},
    VOLUME = {10979},
     PAGES = {237--249},
 PUBLISHER = {Springer, Cham},
      YEAR = {2018},
   MRCLASS = {68R10 (05C15 68Q25)},
  MRNUMBER = {3836066},
       DOI = {10.1007/978-3-319-94667-2_2},
       IGNOREURL= {https://doi.org/10.1007/978-3-319-94667-2_2},
}

@article {MR3962621,
    AUTHOR = {Pilz, Alexander},
     TITLE = {Planar 3-{SAT} with a clause/variable cycle},
   JOURNAL = {Discrete Math. Theor. Comput. Sci.},
  FJOURNAL = {Discrete Mathematics \& Theoretical Computer Science. DMTCS.},
    VOLUME = {21},
      YEAR = {2019},
    NUMBER = {3},
     PAGES = {Paper No. 18, 20},
   MRCLASS = {68R10 (05C10 05C45 68Q25)},
  MRNUMBER = {3962621},
MRREVIEWER = {Guillermo De Ita},
}

@article {MR2771171,
    AUTHOR = {Dondi, Riccardo and Fertin, Guillaume and Vialette, St\'{e}phane},
     TITLE = {Complexity issues in vertex-colored graph pattern matching},
   JOURNAL = {J. Discrete Algorithms},
  FJOURNAL = {Journal of Discrete Algorithms},
    VOLUME = {9},
      YEAR = {2011},
    NUMBER = {1},
     PAGES = {82--99},
      ISSN = {1570-8667},
   MRCLASS = {68R10 (05C85 68Q25 92D20)},
  MRNUMBER = {2771171},
MRREVIEWER = {Till Tantau},
       DOI = {10.1016/j.jda.2010.09.002},
}

@incollection {MR2506799,
    AUTHOR = {Betzler, Nadja and Fellows, Michael R. and Komusiewicz,
              Christian and Niedermeier, Rolf},
     TITLE = {Parameterized algorithms and hardness results for some graph
              motif problems},
 BOOKTITLE = {Combinatorial pattern matching},
    SERIES = {Lecture Notes in Comput. Sci.},
    VOLUME = {5029},
     PAGES = {31--43},
 PUBLISHER = {Springer, Berlin},
      YEAR = {2008},
   MRCLASS = {68Q25 (68Q17 68W20)},
  MRNUMBER = {2506799},
       DOI = {10.1007/978-3-540-69068-9_6},
}

@inproceedings{zheng2011omg,
    author = {Zheng, Chunfang and Swenson, Krister and Lyons, Eric and
Sankoff, David},
    booktitle = {International Workshop on Algorithms in Bioinformatics},
    doi = {10.1007/978-3-642-23038-7_30},
    organization = {Springer},
    pages = {364--375},
    title = {{OMG}! {O}rthologs in multiple genomes--competing
graph-theoretical formulations},
    year = {2011}
}
\end{document}